\documentclass[a4paper,onecolumn,11pt,accepted=2025-11-04]{quantumarticle}
\pdfoutput=1
\usepackage[numbers]{natbib}
\usepackage[utf8]{inputenc}
\usepackage{amsmath, amssymb, amsfonts,amsthm,amsopn,amscd}
\usepackage{dsfont}
\usepackage{graphicx}
\usepackage{booktabs}
\usepackage{longtable}
\usepackage{titletoc}
\usepackage[section]{placeins}
\usepackage[shortlabels]{enumitem}

\usepackage[centertableaux]{ytableau}
\usepackage{mathtools}

\usepackage{tikz-cd}
\usepackage{upgreek}
\usepackage{xcolor}
\usepackage[breaklinks]{hyperref}
\hypersetup{
    colorlinks,
    linkcolor={red!40!black},
    citecolor={blue!50!black},
    urlcolor={blue!20!black}
}

\DeclareMathOperator{\tr}{tr}

\DeclareMathOperator{\height}{height}

\DeclareMathOperator{\spect}{spect}
\DeclareMathOperator{\id}{id}
\DeclareMathOperator{\Herm}{Herm}
\DeclareMathOperator{\Pos}{Pos}

\newcommand{\ket}[1]{\mathinner{|#1\rangle}}

\newcommand{\ot}[0]{\otimes}
\newcommand{\one}[0]{\mathds{1}}
\renewcommand{\a}{\alpha}

\renewcommand{\SS}{\mathcal{S}}
\renewcommand{\AA}{\mathcal{A}}

\newcommand{\N}{\mathds{N}}

\newcommand{\C}{\mathds{C}}

\newcommand{\HH}{\mathcal{H}}

\newcommand{\XX}{\mathcal{X}}
\newcommand{\YY}{\mathcal{Y}}
\newcommand{\UU}{\mathcal{U}}
\newcommand{\PP}{\mathcal{P}}
\newcommand{\DD}{\mathcal{D}}

\newcommand{\frU}{\mathfrak{U}}

\newcommand{\cdn}[0]{(\C^d)^{\otimes n}}
\newcommand{\cdk}[0]{(\C^d)^{\otimes k}}
\newcommand{\cdkn}[0]{((\C^d)^{\ot k})^{\ot n}}

\newcommand{\csk}[0]{\C S_k}
\newcommand{\Aell}   [0]{\substack{ A \in \AA \\  \sigma \in S_k }}

\newtheorem{theorem}{Theorem}
\newtheorem*{theorem*}{Theorem}
\newtheorem{proposition}[theorem]{Proposition}

\newtheorem{corollary}[theorem]{Corollary}

\newtheorem{example}[theorem]{Example}
\newtheorem{remark}[theorem]{Remark}

\newtheorem*{problem*}{Problem}
\newtheorem*{question*}{Question}

\newtheorem*{result*}{Result}

\newtheorem{thmA}{Theorem}

\newcommand{\nn}{\nonumber}
\newcommand{\overbar}[1]{\mkern 1.5mu\overline{\mkern-1.5mu#1\mkern-1.5mu}\mkern 1.5mu}

\begin{document}

\title{Refuting spectral compatibility of quantum marginals}

\author{Felix Huber}
\affiliation{Division of Quantum Computing, Faculty of Mathematics, Physics and Informatics, University of Gdańsk, Wita Stwosza 57, 80-308 Gdańsk, Poland}
\orcid{0000-0002-3856-4018}
\email{felix.huber@ug.edu.pl}

\author{Nikolai Wyderka}
\affiliation{Institut für Theoretische Physik III, Heinrich-Heine-Universität Düsseldorf,Universitätsstraße 1, D-40225 Düsseldorf, Germany}
\orcid{0000-0003-3002-9878}
\email{wyderka@hhu.de}

\maketitle

\begin{abstract}
The spectral variant of the quantum marginal problem asks:
Given prescribed spectra for a set of overlapping quantum marginals,
does there exist a compatible joint state? 
The main idea of this work is a symmetry-reduced semidefinite programming hierarchy that detects when no such joint state exists.
The hierarchy is complete, in the sense that it detects every
incompatible set of spectra.
The refutations it provides are
dimension-free, certifying incompatibility in all local dimensions.
The hierarchy also applies to the
sums of Hermitian matrices problem, 
the compatibility of local unitary invariants,
for certifying vanishing Kronecker coefficients,
and to optimize over equivariant state polynomials.
\end{abstract}

\keywords{quantum marginal problem, spectral compatibility, semidefinite program, local unitary invariants, symmetric group}
\setcounter{tocdepth}{1}

\section{Introduction}
The compatibility of quantum marginals (also known as reduced density matrices) 
is central to quantum phenomena such as entanglement and non-locality.
It also plays a key role in quantum algorithms like quantum error correction and adiabatic quantum computation.
At the heart of this {\em quantum marginal problem} lies a 
constraint satisfaction problem with prohibitive computational complexity:
it is QMA-complete, with QMA being the quantum analog of NP~\cite{Liu2006}.
This renders 
molecular-structure and ground state calculations in chemistry\footnote{
For fermionic systems, the quantum marginal problem is also known as the $N$-representability problem,
the full set of conditions for two-electron reduced states were given by Mazziotti 
in~\cite{PhysRevLett.108.263002}.}
and physics challenging.
Consequently, 
a large literature focuses on conditions for quantum marginals to be compatible~\cite{
Schilling2014, 
https://doi.org/10.3929/ethz-a-010250985,
Huber_Thesis,
Klassen_phd-thesis}.

A more fundamental problem is to decide on the compatibility of the spectra instead of the reduced density matrices.
The simplest formulation of this {\em spectral} variant of the quantum marginal problem
is perhaps the following:
Given a collection of prescribed eigenvalues $\mu_{AB}$ and $\mu_{BC}$, associated with subsystems $AB$ and $BC$, respectively,
does there exist a joint state $\varrho_{ABC}$ such that its reduced density matrices 
$\varrho_{AB} = \tr_C(\varrho_{ABC})$ and $\varrho_{BC} = \tr_A(\varrho_{ABC})$ have 
spectra $\spect(\varrho_{AB}) = \mu_{AB}$ and $\spect(\varrho_{BC}) = \mu_{BC}$? 
If such a joint state exists, the spectra are said to be {compatible};
they are {incompatible} otherwise.
This spectral formulation also maintains an intimate connection to 
fundamental questions in representation and matrix theory~\cite{Klyachko1998}.

The work by Klyachko~\cite{1742-6596-36-1-014} allows one
to establish {\it compatibility} of prescribed spectra 
in the bipartite case through representation theoretic methods:
compatible spectra correspond to families of non-vanishing dilated Kronecker coefficients.
For tripartite systems,
Christandl, {\c{S}}ahino{\u{g}}lu, and Walter~\cite{Christandl2018} showed that
spectra are compatible if and only if the recoupling coefficients of the symmetric group
$S_k$ decay at most polynomially in $k$.
Kronecker coefficients can be computed with algorithms from
algebraic combinatorics and geometric complexity theory~\cite{BALDONI2018113, 8555166},
giving rise to a hierarchy of one-sided criteria for compatibility for the non-overlapping case.
Complete lists of inequalities for non-overlapping spectra of bi- and tripartite systems are given in Refs.~\cite{1742-6596-36-1-014, VergneWalter2017}.
It is harder, however, for these methods to determine the {\it incompatibility}
of marginal spectra, in particular when they overlap.

The aim of this manuscript is to provide such a complementary method:
a semidefinite programming hierarchy for certifying spectral incompatibility,
where the marginals are allowed to overlap (Section~\ref{sec:SDPref}).
It is complete, in the sense that it detects every set of incompatible spectra at some level of the hierarchy.
Our formulation in terms of a symmetric extension hierarchy is furthermore symmetry-reduced,
drastically reducing the size of the optimization problem (Section~\ref{sec:symmetry}).
This approach can produce spectral incompatibility certificates for both finite fixed local dimensions 
and arbitrary local dimensions (Section~\ref{sec:dimfree}).
A modern desktop computer can access up to the fourth level of the hierarchy in the case of four-partite states,
and the fifth level in the case of tripartite states (Section~\ref{sec:numerics}).

\section{Contribution}
Let $\varrho \in L(\cdn)$ be an $n$-partite quantum state of local dimension $d$
and $\AA$ a collection of subsystems of $\{1,\dots, n\}$.
Given a subsystem $A \in \AA $, 
denote by $\varrho_A = \tr_{A^c}(\varrho) $ the reduced density matrix on $A$ 
and by $\mu_A$ the eigenvalues of $\varrho_A$, i.e., the spectrum on $A$.
We want to answer the following:

\begin{problem*}\label{prob:sqmp}
Let $\AA$ be a collection of subsets of $\{1,\dots,n\}$.
Given prescribed spectra $\{ \mu_A \, |\, A \in \AA \}$,
does there exist a joint state $\varrho$ for which
the spectrum of $\varrho_A = \tr_{A^c}(\varrho)$ equals $\mu_A$
for all $A \in \AA$? 
\end{problem*}

We provide the following symmetry-reduced semidefinite programming hierarchy for determining spectral incompatibility
for overlapping marginals. 
\begin{thmA}
Let $\AA$ be a collection of subsets of $\{1,\dots,n\}$
with associated marginal spectra $\{ \mu_A \, |\, A \in \AA \}$.
The spectra are compatible with a joint quantum state on $(\C^d)^{\ot n}$ 
if and only if every level in the hierarchy~\eqref{eq:hierarchy_dual_symred} is feasible.
If a level of the hierarchy 
returns a negative value, then the spectra are incompatible.
\end{thmA}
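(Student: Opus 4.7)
My plan is to treat the theorem as a strong-duality statement for a primal-dual pair arising from the symmetric extension reformulation of the spectral marginal problem. The second assertion (negative dual value $\Rightarrow$ incompatibility) then reduces to weak duality, while the ``if and only if'' statement requires a compactness argument coupled to an invariant-theoretic reconstruction.

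The first step is to phrase Problem~\ref{prob:sqmp} as a moment problem. The prescribed spectrum $\mu_A$ on a subsystem $A\in\AA$ is equivalent to fixing $\tr(\varrho_A^{\ot k}\,\pi_k(\sigma))=p_{\sigma}(\mu_A)$ for every $k\geq 1$ and every $\sigma\in S_k$, where $\pi_k(\sigma)$ permutes the $k$ copies of $A$ and $p_\sigma$ is a polynomial in the power sums of $\mu_A$. By Procesi's fundamental theorem of matrix invariants, this moment data is \emph{equivalent} to fixing $\mu_A$. Hence a joint state $\varrho$ realizes $\{\mu_A\}_{A\in\AA}$ iff $\varrho^{\ot k}$ satisfies a linear family of such moment equalities for every $k$. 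The level-$k$ relaxation of the hierarchy replaces $\varrho^{\ot k}$ by a positive semidefinite variable $X_k\in L(\cdkn)$ subject to the equalities of degree $\leq k$ and to the diagonal $S_k$-symmetry across the $n$ sites; Schur--Weyl duality block-diagonalizes $X_k$ into isotypic components indexed by Young diagrams with $k$ boxes and yields the symmetry-reduced dual~\eqref{eq:hierarchy_dual_symred}.

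The easy direction is now immediate: if $\varrho$ is compatible, then $X_k=\varrho^{\ot k}$ feasibly witnesses every level, so the primal is feasible and the dual value is $\geq 0$. Contrapositively, a negative dual value at any level produces, by weak duality, a linear combination of the moment equalities that is simultaneously nonnegative on every positive semidefinite $X_k$ and equal to the negative dual objective on any candidate $\varrho^{\ot k}$; no such $\varrho$ can exist, which handles the second assertion.

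The hard part is completeness: assuming every level is feasible, I must produce an actual $\varrho\in L(\cdn)$ with the prescribed marginal spectra. My plan is twofold. First, by Banach--Alaoglu applied to the bounded family $\{X_k\}$, pass to a subnet along which all degree-$j$ moments $m_j(A,\sigma)=\tr(X_k\,\pi_k(\sigma)\ot\one)$ converge simultaneously for every $j\leq k$, $A\in\AA$ and $\sigma\in S_j$. The limit $m_\infty$ satisfies the full infinite tower of equalities $m_\infty(A,\sigma)=p_\sigma(\mu_A)$. Second, combining the quadratic constraints in the group algebra $\csk$ (the idea credited to Ligthart in the acknowledgments) with Procesi--Razmyslov-type reconstruction for matrix invariants, I would argue that any positive moment functional satisfying this tower is realized by an actual density matrix $\varrho$ whose marginals $\varrho_A$ have spectrum $\mu_A$. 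The principal obstacle I anticipate is this last implication: the $\csk$-quadratic identities must be shown to generate, in the limit $k\to\infty$, a complete set of positivity constraints on the cone of LU-invariants of $n$-partite density matrices, so that no spurious ``phantom'' moment functional is picked up at infinity and the overlap conditions between distinct $A\in\AA$ are simultaneously preserved. Closing this gap is the core content of the completeness proof and is where I expect the argument to be most intricate.
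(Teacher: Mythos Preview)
Your treatment of the easy direction (compatibility $\Rightarrow$ feasibility at every level) and of the second assertion via weak duality is correct and matches the paper's argument essentially verbatim. The completeness direction, however, has a genuine gap, and the tool you propose is not the one that closes it.

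The paper does not proceed via Banach--Alaoglu and Procesi--Razmyslov reconstruction. The key ingredient is the \emph{quantum de Finetti theorem}: if the primal is feasible at every level $k$, then the feasible variables furnish an infinitely symmetrically extendable family, and de Finetti forces the reduction onto any fixed number $2\ell$ of copies to be separable across copies,
\[
\tr_{k-2\ell}(X)=\int \varrho^{\ot 2\ell}\,dm(\varrho)
\]
for a probability measure $m$ on single-copy states $\varrho\in L(\cdn)$. The quadratic constraints (those indexed by permutations of the shape $\sigma\times\sigma^{-1}$) then give
\[
\int \bigl|\tr\bigl((\eta(\sigma^A)-q_{A,\sigma})\,\varrho^{\ot\ell}\bigr)\bigr|^{2}\,dm(\varrho)=0,
\]
so $m$-almost every $\varrho$ in the mixture already satisfies all the linear moment constraints and hence has the prescribed marginal spectra. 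This is precisely how the quadratic constraints are used: they upgrade \emph{average} satisfaction of the linear constraints over the de Finetti mixture to \emph{pointwise} satisfaction.

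Your Banach--Alaoglu step produces at best a limiting moment functional (and note that the $X_k$ live in different spaces, so some care is needed even to formulate the limit); it does not by itself yield a decomposition into product states $\varrho^{\ot\ell}$. Without that product structure, the quadratic constraints do not factor into a nonnegative integrand, and the argument stalls. Procesi--Razmyslov tells you that trace polynomials generate the ring of unitary invariants; it does not tell you when a prescribed positive assignment of values to those invariants is realized by an actual density matrix, which is the realizability question you face. The ``intricate'' step you anticipate is exactly what de Finetti handles in one stroke, and once it is in place the paper's completeness proof (Theorem~\ref{thm:complete}) is only a few lines.
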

For the proof, see Theorem~\ref{thm:hierarchy_dual_symred} and Theorem~\ref{thm:complete}.
The symmetry-reduction allows to work with up to the fourth level of the hierarchy
for four-partite systems
and the fifth level for three-partite systems on a modern desktop computer
(see Figure~\ref{fig:threequbit_incompat_m4} and Table~\ref{fig:flat_spectra}).

\begin{thmA}
When the level of the hierarchy is less than or equal to the local dimension ($k\leq d$),
the incompatibility witnesses produced by the hierarchy~\eqref{eq:hierarchy_dual_symred} 
are dimension-free
and the spectra are incompatible in all local dimensions.
\end{thmA}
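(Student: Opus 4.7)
The plan is to exploit Schur--Weyl duality to recast the symmetry-reduced dual~\eqref{eq:hierarchy_dual_symred} --- its variables, positivity requirements, and objective --- entirely within the abstract group algebra $\C S_k$, eliminating all dependence on the local dimension once $k \leq d$.

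First I would apply Schur--Weyl duality on each of the $n$ sites of the $k$-copy extension: locally, $(\C^d)^{\ot k} = \bigoplus_{\lambda \vdash k,\,\ell(\lambda)\leq d} V_\lambda^{U(d)} \ot S_\lambda$, and the natural map $\C S_k \to \End((\C^d)^{\ot k})$ has image $\bigoplus_{\ell(\lambda)\leq d} \End(S_\lambda)$. The hypothesis $k \leq d$ forces every partition $\lambda \vdash k$ to satisfy $\ell(\lambda) \leq k \leq d$, so this map becomes an isomorphism onto its image; the $S_k^{\ot n}$-twirled primal, and therefore also the dual witness, lives faithfully inside $(\C S_k)^{\ot n}$. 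Next, the constraint data --- the prescribed spectra $\{\mu_A\}_{A \in \AA}$ --- enters the dual only through quantities of the form $\tr(\varrho_A^{\ot k} \sigma)$ with $\sigma \in S_k$, which are products of Newton power sums $p_j(\mu_A)$ of the eigenvalues and therefore symmetric functions of the spectrum that do not refer to the ambient Hilbert space at all.

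With these two observations the SDP is recast as an optimization over $(\C S_k)^{\ot n}$ whose data are symmetric polynomials in the $\mu_A$. Given a witness $W \in (\C S_k)^{\ot n}$ extracted at some dimension $d \geq k$ with strictly negative objective value, I would transport it to any other local dimension $d'\geq 1$ via the canonical quotient map $(\C S_k)^{\ot n} \to \End((\C^{d'})^{\ot k})^{\ot n}$ whose kernel is the ideal of isotypic blocks with $\ell(\lambda) > d'$ on some site. Positivity of $W$ in the regular representation of $S_k^{\ot n}$ descends to positivity on every surviving block, so $W$ remains a feasible dual variable at dimension $d'$; the objective value, being polynomial in the $p_j(\mu_A)$, is unchanged. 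Combined with completeness (Theorem~\ref{thm:complete}), this yields incompatibility in every local dimension.

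The main obstacle is the first step: one must have formulated the hierarchy so that its positivity requirement on the primal is a constraint on abstract elements of $\C S_k$ (equivalently, positivity in the regular representation), not merely on their image in $\End((\C^d)^{\ot k})$. When $k \leq d$ the two formulations coincide because the representation is faithful; but when $k > d$ they differ precisely by the ideal of partitions with $\ell(\lambda) > d$, and a witness extracted at dimension $d$ could exploit the vanishing of those blocks in ways that fail for $d' < d$. The hypothesis $k \leq d$ is exactly what places the witness on the universal $\C S_k$ side of this distinction, and hence what allows the push-forward through every quotient $q_{d'}$ to produce a valid certificate in all local dimensions.
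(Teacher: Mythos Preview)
Your proposal is correct and follows essentially the same route as the paper: use Schur--Weyl duality and the hypothesis $k\leq d$ to lift the dual witness faithfully into $(\C S_k)^{\ot n}$, then push it forward to every other local dimension while preserving positivity and the (dimension-independent) objective value. Two minor differences are worth noting. First, the paper splits into two cases: for $d'>d$ it argues via a sum-of-squares decomposition $f=aa^*$ in the group algebra, and for $d'<d$ it uses the direct embedding $\C^{d'}\hookrightarrow\C^d$ (compatibility in the smaller space trivially implies compatibility in the larger); your uniform quotient-map argument handles both at once and is arguably cleaner. Second, your appeal to completeness (Theorem~\ref{thm:complete}) at the end is unnecessary: a feasible dual witness with negative objective value certifies primal infeasibility at dimension $d'$ by weak duality alone, and primal infeasibility already implies spectral incompatibility via the easy direction of Proposition~\ref{prop:qmp2}. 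Completeness is the converse and plays no role here.
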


For the proof, see Theorem~\ref{thm:dimfree}.
As a consequence, the SDP refutations stabilize 
when the level of the hierarchy $k$ equals the local dimension,
certifying incompatibility for all local dimensions.

\subsection{Further applications}
We list further areas to which  our complete hierarchy also applies.
These are either reformulations or special cases of the spectral marginal problem such as
$(1)$ and $(2)$; or problems that are slightly more general such as $(3)$ and $(4)$.

\begin{enumerate}
\item {\bf Kronecker and recoupling coefficients.}
Klyachko has shown that $\spect(\varrho_A)$, $\spect(\varrho_B)$, and $\spect(\varrho_{AB})$ are
compatible if and only if dilations of associated Young tableaux $\lambda, \mu, \nu$ 
allow for a non-vanishing Kronecker coefficient, $g(m\lambda, m\mu, m\nu)$ $\neq 0$ 
for some $m > 0$~\cite{quant-ph/0409113v1}.\footnote{
See also the work by Christandl and Mitchison~\cite{Christandl2006} 
that showed one direction of this statement.}
A related statement holds for marginals of tripartite systems:
the recoupling coefficients of the symmetric group $S_k$ decay polynomially in $k$ if the spectra are compatible and exponentially otherwise~\cite{Christandl2018}.
Deciding the positivity of Kronecker coefficients is an NP-hard task~\cite{Ikenmeyer2017}.
The algorithm by Baldoni, Vergne, and Walter allows to compute dilated Kronecker coefficients~\cite{BALDONI2018113},
giving rise to a hierarchy of one-sided compatibility criteria for the non-overlapping case.
Our hierarchy provides a complementary method:
showing that a set of spectra is incompatible also proves that
the Kronecker coefficient, $g(m\lambda, m\mu, m\nu) = 0$ for all $m \in \N$.
Thus, our hierarchy is also complete for this problem.

\smallskip
\item {\bf Sums of Hermitian Matrices.}
The Sums of Hermitian Matrices problem (solved by Klyachko~\cite{Klyachko1998} 
and related to honeycombs by Knutson and Tao~\cite{KnutsonTao2001}) asks:
given Hermitian matrices $A$ and $B$ with 
spectra $\spect(A)$ and $\spect(B)$, what are the constraints on the spectrum of $A+B$?
It can be shown that this problem is equivalent to an instance of the one-body spectral
marginal problem~\cite{Klyachko1998, quant-ph/0409113v1}.
For the related question for known spectra of sums of multiple matrices, e.g., $A+B$ and $A+C$, a similar relation in terms of overlapping marginals holds
\cite[Lemma 9.13.]{https://doi.org/10.3929/ethz-a-010250985},
~\cite[Theorem 12]{Christandl2018}.
Our hierarchy can certify 
that such spectra cannot be realized
by sums of hermitian matrices with known spectra.

\smallskip
\item {\bf Local unitary invariants and quantum codes.}
Local unitary invariants of $n$-partite quantum states $\varrho$
correspond to the expectation values of elements of $\C S_k^n$ on
$\varrho^{\otimes k}$.
Modifying our SDP to include these more general objective functions and constraints,
one can certify the incompatibility of a set of local unitary invariants with a joint quantum state.
This strengthens the linear programming bounds on quantum codes:
the existence of a quantum code of given block length, size, and distance can be formulated
in terms of a compatible set of local unitary invariants of degree two,
the quantum weight enumerators~\cite{796376}.
Our hierarchy can certify that no compatible weight enumerators exist, thus ruling out the existence of a corresponding quantum code.

\smallskip
\item {\bf Equivariant state polynomials.}
Our hierarchy allows to numerically find inequalities in the Löwner order for equivariant state polynomials.\footnote{
Note that this setting is distinct from the one found in Ref.~\cite{klep2023statepolynomialspositivityoptimization},
which refers to non-commutative polynomials evaluated on states. Also, $\tr(\,\cdot\,)$ refers to the non-normalized matrix trace,
in contrast to~\cite{klep2021optimization}.}
These constitute a type of polynomials whose variables are states, and whose positivity is invariant under local unitary transformations.
For example,
$(\varrho^{T_1}\sigma^{T_1})^{T_1}$
is an equivariant state polynomial in two bipartite states $\varrho$ and $\sigma$ where $(\,\cdot\,)^{T_1}$ is the partial transpose.
Our hierarchy can minimize such expressions [e.g., through a converging sequence on lower bounds on $\a = \min_{\varrho,\sigma,\mu} \tr\big( (\varrho^{T_1} \sigma^{T_1})^{T_1}\mu\big)$], so that $(\varrho^{T_1} \sigma^{T_1})^{T_1} - \a\one$ is positive semidefinite for all states $\varrho,\sigma$. This allows us to systematically find new equivariant state polynomial inequalities.
\end{enumerate}

\section{Notation}

\subsection{Quantum systems} 
Denote by $L(\HH)$ the space of linear maps acting on a Hilbert space $\HH$.
Quantum states on $n$ systems with $d$ levels each are represented by 
positive operators of trace one acting on $\cdn$, i.e., satisfying 
 $\varrho \geq 0$, $\tr(\varrho) = 1$.
The marginal or reduced state of a $n$-partite state $\varrho$ on subsystems $A \subseteq \{1,\ldots,n\}$
is denoted by
$\varrho_A = \tr_{A^c}(\varrho)$, where $A^c$ is the complement of $A$ in $\{1,\dots, n\}$.
In what follows, $\AA$ is a collection of subsets $A \subseteq \{1,\dots,n\}$.
We make use of the coordinate-free definition of the partial trace, which states that the partial trace $\tr_2$ over the second of two systems $\HH_1 \otimes \HH_2$ is the unique linear operator
satisfying
\begin{equation}\label{eq:ptrace}
\tr\big((M\ot \one) N\big) = \tr\big(M \tr_2(N)\big)
\end{equation} 
for all $M\in L(\HH_1)$ and $N \in L(\HH_1 \ot \HH_2)$.
Finally, we denote the set of unitary $d \times d$ matrices by $\UU(d)$.

\subsection{Symmetric group}

Our work makes use of $k$ copies of $n$-particle states, with the symmetric group acting on both copies and their subsystems.
The symmetric group permuting $k$ elements is $S_k$.
The group ring $\C S_k$ is formed by formal sums
$\csk = \big\{\sum_{\sigma \in S_k} a_\sigma \sigma \,:\, a_\sigma \in \C \big\}$.
The linear extension of the multiplication of $S_k$ defines the multiplication on $\csk$.
An element $a$ of $\csk$ with $a = \sum_{\sigma \in S_k} a_\sigma \sigma$, has an assoicated adjoint
$a^* = \sum_{\sigma \in S_k} \overbar{a}_\sigma \sigma^{-1}$;
it is Hermitian if $a = a^*$.

Denote by $S_k^n = S_k \times \dots \times S_k$ the $n$-fold Cartesian product of $S_k$.
Let $\pi \in S_k$ act on $\sigma = (\sigma_1, \dots, \sigma_n) \in S_k^n$ via
\begin{equation}\label{eq:outer_rep}
 \pi \sigma \pi^{-1} := (\pi \sigma_1 \pi^{-1}, \dots, \pi \sigma_n \pi^{-1}),
\end{equation} 
and by linear extension also on $\C S_k^n$.
Finally, denote by
$ (\C S_k^n)^{S_k}$
the subspace of $\C S_k^n$ invariant under the diagonal action of $S_k$,
\begin{equation}
(\C S_k^n)^{S_k} = 
\big\{ 
a \in \C S_k^n 
\,:\,
a = \pi a \pi^{-1}\,\forall 
        \pi \in S_k \,
\big\}.
\end{equation}

\subsection{Representations}

Let $\sigma \in S_k$ act on $\cdk$ by its representation $\eta_d(\sigma)$, which permutes the tensor factors,
\begin{equation}\label{eq:eta_d}
\eta_d(\sigma) 
\ket{v_1} 
\ot \cdots \ot 
\ket{v_k} 
= 
\ket{v_{\sigma^{-1}(1)}} 
\ot \cdots \ot 
\ket{v_{\sigma^{-1}(k)}}\,.
\end{equation}
Now consider $\sigma = (\sigma_1, \dots, \sigma_n) \in S_k^n$.
It acts on $((\C^d)^{\ot k})^{\ot n}$ 
via
\begin{equation}
 \eta^d(\sigma) := \eta_d(\sigma_1) \ot \dots \ot \eta_d(\sigma_n)\,.
\end{equation}
with $\eta_d(\sigma_i)$ acting on the collection of the $k$ copies of the $i$'th tensor factor.\footnote{
This is the same setting as found in Ref.~\cite{817508}.}
If the local dimension $d$ is clear from the context, we will use $\eta = \eta^d$.

Another action we need is that of permuting the $k$ copies of $n$-partite states. 
For $\pi \in S_k$, the representation $\tau$ 
acts diagonal conjugate on $((\C^d)^{\ot k})^{\ot n}$
\begin{equation}
 \tau(\pi) \eta(\sigma) \tau(\pi^{-1}):= \eta_d(\pi \sigma_1 \pi^{-1}) \ot \dots \ot \eta_d(\pi \sigma_n \pi^{-1})
 = \eta(\pi \sigma \pi^{-1})\,,
\end{equation}
making it compatible with Eq.~\eqref{eq:outer_rep}.

Finally, a representation $R$ is called orthogonal if $R(g^{-1}) = R(g)^T$.
For the symmetric group, Young's orthogonal representation is orthogonal~\cite{JamesKerber1984}.
In the software {\em SageMath}, it can be obtained with the command \texttt{SymmetricGroupRepresentation} ~\cite{sagemath}.

\section{Spectra are polynomial in $\varrho$} 

\subsection{Spectrum from $\varrho^{\ot k}$}
We first show how the spectral quantum marginal problem can be formulated as a 
constraint that is polynomial in $\varrho$.
This is done with a generalization of the so-called swap trick. 

Consider a single quantum system $\varrho \in L(\C^d)$.
It is clear that $\tr(\varrho^\ell) = \sum_{i} \mu_i^\ell$,
where $\mu_i$ are the eigenvalues of $\varrho$.
A complex $d \times d$ matrix has $d$ eigenvalues, such that 
the set $\{\tr(\varrho^\ell)\,:\, \ell = 1, \dots, d\}$ determines
the spectrum of $\varrho$.

Recall that $\sigma \in S_k$ acts on $\cdk$ via its representation $\eta_d(\sigma)$ that permutes the tensor factors,
\begin{equation}
\eta_d(\sigma) 
\ket{v_1}   
\ot \cdots \ot 
\ket{v_k} 
= 
\ket{v_{\sigma^{-1}(1)}} 
\ot \cdots \ot 
\ket{v_{\sigma^{-1}(k)}}\,.
\end{equation} 
For a cycle $(\alpha_1 \dots \alpha_\ell) \in S_k$ of length $\ell \leq k$ and a Hermitian matrix $B \in L(\C^d)$, 
it is known that
  $\tr( \eta_d((\alpha_1 \dots \alpha_\ell) ) B^{\otimes k}) = \tr(B^{\ell}) \tr(B)^{k-\ell}$~\cite{Kostant1958}.
For a density matrix, this simplifies further to 
\begin{equation}\label{eq:invariants_from_perm0}
 \tr( \eta_d((\alpha_1 \dots \alpha_\ell) ) \varrho^{\otimes k})
 = \tr(\varrho^{\ell})\,.
\end{equation}
Consequently, under a global trace, the permutation operators acting 
on copies of a state~$\varrho$ can recover its spectrum.

\subsection{Permuting subsystems of copies}
A similar strategy works with multipartite states. 
Then, we additionally need to consider the action of permutations on subsystems.

Recall that the element $\sigma \in S_k^n$ acts on $((\C^d)^{\otimes k})^{\otimes n}$
via
\begin{equation}\label{eq:perm_on_subs}
 \eta(\sigma) := \eta_d(\sigma_1) \ot \dots \ot \eta_d(\sigma_n)\,,
\end{equation}
with $\eta_d(\sigma_i)$ acting on the $k$ copies of the $i$'th tensor factor.
Now, for a subset $A \subseteq \{1, \dots, n\}$, define $\sigma^A = (\sigma^A_1, \dots, \sigma^A_n) \in S_k^n$ through
\begin{equation}
 \sigma^A_i = 
 \begin{cases}
  \sigma \quad \,\,\mathrm{if }~i \in A \\
  \id    \quad \mathrm{if }~i \not \in A\,.
 \end{cases}
\end{equation} 
By Eq.~\eqref{eq:perm_on_subs}, the operator $\eta(\sigma^A)$ acts 
on the collection of subsystems contained in $A$ with~$\sigma$, 
while it acts with the identity matrix on the remaining subsystems.

With some abuse of notation, 
$\eta(\sigma^A)$ can be thought of acting on $((\C^d)^{\otimes n})^{\otimes k}$ 
as well as on any tensor space containing the subsystems~$A$. 
For $\ell \leq k$,
Eq.~\eqref{eq:invariants_from_perm0} generalizes to
\begin{align}\label{eq:invariants_from_perm}
 \tr\big( \eta((\alpha_1\dots \alpha_\ell)^A) \varrho^{\otimes k} \big)
 &= \tr\big(\eta((\alpha_1\dots \alpha_\ell)^A) \tr_{A^c} (\varrho^{\otimes k}) \big) \nn\\
 &= \tr\big(\eta((\alpha_1\dots \alpha_\ell)^A) \varrho_A^{\otimes \ell} \big) 
 = \tr\big(\varrho_A^{\ell} \big)\,.
\end{align}
where we have used the coordinate-free definition of the partial trace in Eq.~\eqref{eq:ptrace}.

\smallskip
Let a prescribed spectrum $\mu_A$ on subsystem $A$ be given.
Define 
\begin{equation}\label{eq:q_to_lambda}
q_{A, \ell} = \sum_{\mu_i \in \mu_A} \mu_i^\ell\,.
\end{equation}
If a $\varrho$ realizing $\mu_A$ exists, then 
for any $\sigma = (\a_1\dots \a_\ell) \in S_k$,
\begin{equation}\label{eq:qAell}
 q_{A, \ell} = \tr(\eta(\sigma^A) \varrho^{\ot k}) = \tr( \varrho_A^\ell)\,.
\end{equation} 
Allowing for more general $\sigma \in S_k^n$, one can generalize this to incorporate any local unitary invariant polynomial function of reductions of the state via
\begin{equation}
 q_{A, \sigma} = \tr(\eta(\sigma^A) \varrho^{\ot k})\,.
\end{equation}

\subsection{Compatibility conditions}

Denote by $\HH = \cdn$ the space of a $n$-qudit system.
Our discussion makes the following immediate.

\begin{proposition}\label{prop:qmp1}
 Let $\AA$ be a collection of subsystems of $\{1,\ldots,n\}$ and $\mu = \{ \mu_A \,|\, A \in \AA\}$ be prescribed  spectra of reductions.
 Let $m$ be the size of the largest spectrum
 and 
 $q_{A, \ell}$ be given in terms of $\mu_A$ by Eq.~\eqref{eq:q_to_lambda}.
 Then $\mu$ is compatible with a joint state,
 if and only if 
 there exists a state $\varrho \in L(\HH)$,
 such that for all 
 $\ell$-cycles $\sigma = (\a_1, \dots, \a_\ell)$
 with $\ell = 1, \dots, m$,
 and $A \in \AA$,
 \begin{align}\label{eq:power-sums}
  \tr(\eta(\sigma^A) \varrho^{\otimes m} ) &= q_{A, \ell}\,.
 \end{align}
\end{proposition}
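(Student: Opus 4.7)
The proposition is essentially a direct corollary of two facts already in hand: the permutation--trace identity Eq.~\eqref{eq:invariants_from_perm}, which converts the left-hand side of Eq.~\eqref{eq:power-sums} into a single-marginal invariant, and the classical fact that the first few power sums of the eigenvalues of a matrix determine its spectrum. The plan is to split the biconditional into its two directions and let each of these ingredients do its job.

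For the forward direction, I would assume $\mu$ is compatible, so there exists a density operator $\varrho \in L(\HH)$ whose reduction on each $A \in \AA$ has spectrum $\mu_A$. Applying Eq.~\eqref{eq:invariants_from_perm} with $k=m$ to every $\ell$-cycle $\sigma = (\alpha_1 \cdots \alpha_\ell) \in S_m$ and every $A \in \AA$, I get
\begin{equation*}
\tr\bigl(\eta(\sigma^A)\,\varrho^{\otimes m}\bigr) \;=\; \tr(\varrho_A^{\ell}) \;=\; \sum_{\mu_i \in \mu_A} \mu_i^{\ell} \;=\; q_{A,\ell},
\end{equation*}
which is precisely Eq.~\eqref{eq:power-sums}. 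This direction is essentially a reshuffling of what has already been proved.

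For the converse, I would suppose that $\varrho$ is a density operator on $\HH$ satisfying Eq.~\eqref{eq:power-sums} for every $A \in \AA$, every $\ell \in \{1,\dots,m\}$, and every $\ell$-cycle in $S_m$. Invoking Eq.~\eqref{eq:invariants_from_perm} again turns this into $\tr(\varrho_A^\ell) = q_{A,\ell}$ for each such $(A,\ell)$. Since $q_{A,\ell}$ is by definition the $\ell$-th power sum of $\mu_A$, this says that the first $m$ (hence the first $|\mu_A|$) power sums of $\spect(\varrho_A)$ agree with those of $\mu_A$. Newton's identities then recover the elementary symmetric polynomials of both multisets, hence the characteristic polynomial, and therefore $\spect(\varrho_A) = \mu_A$ for every $A \in \AA$. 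This is exactly the desired compatibility.

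There is no serious obstacle in either direction; both reduce to bookkeeping around the already-established identity Eq.~\eqref{eq:invariants_from_perm}. The only point that needs a moment's care is the choice of $m$: it must be at least the size of each $\mu_A$ so that $p_1,\dots,p_m$ determine the multiset of eigenvalues of $\varrho_A$ uniquely. This is guaranteed by the hypothesis that $m$ is taken to be the size of the largest prescribed spectrum.
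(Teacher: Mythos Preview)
Your proof is correct and follows essentially the same route as the paper's own argument: the forward direction is exactly the identity in Eq.~\eqref{eq:invariants_from_perm}/Eq.~\eqref{eq:qAell}, and for the converse the paper simply asserts that the power sums ``completely determine'' the spectrum, whereas you spell this out via Newton's identities. Your added remark that $m$ must dominate each $|\mu_A|$ for the power sums to pin down the spectrum is a useful clarification that the paper leaves implicit.
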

\begin{proof}
 If a compatible $\varrho$ exists, then $\tr(\eta(\sigma)^A \varrho^{\otimes m})$ evaluates through Eq.~\eqref{eq:qAell} to $q_{A,\ell}$.
 Conversely, if there exists a $\varrho$ satisfying Eq.~\eqref{eq:power-sums} for all $\ell$-cycles and $A\in \AA$,
 then its spectrum on $A$ is completely determined and equal to $\mu_A$ for all $A\in \AA$.
\end{proof}

\subsection{Symmetric extension relaxation}

We relax the tensor product $\varrho^{\otimes m}$ in Proposition~\ref{prop:qmp1} to a symmetric state:
 \begin{proposition}\label{prop:qmp2}
 Let $\AA$ be a collection of subsystems of $\{1,\ldots,n\}$ and $\mu = \{ \mu_A \,|\, A \in \AA\}$ 
 be prescribed spectra of reductions. 
 If the spectra $\mu$
 are compatible with a joint state,
 then for every $k\in \N$ 
 there exists a state $\varrho_k \in L(\HH^{\otimes k})$
such that for all 
 $\ell$-cycles $\sigma = (\a_1, \dots, \a_\ell)$
 with $\ell = 1, \dots, k$,
 and $A \in \AA$,
\begin{equation}\label{eq:nec_crit}
  \begin{aligned} 
 \tr\big( \eta(\sigma^A) \varrho_k \big)
                &= q_{A, \ell}\,.
   \end{aligned}
\end{equation}
\end{proposition}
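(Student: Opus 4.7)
The statement is a necessity claim: compatibility implies the existence of the symmetric extension $\varrho_k$ matching the prescribed power sums. The plan is to take the obvious witness, $\varrho_k := \varrho^{\otimes k}$, where $\varrho \in L(\HH)$ is any compatible joint state supplied by Proposition~\ref{prop:qmp1}. Since $\varrho\geq 0$ and $\tr(\varrho)=1$, the tensor power $\varrho^{\otimes k}$ is automatically a valid state on $\HH^{\otimes k}$, so no positivity or normalization work is needed.

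For the trace conditions, I would simply invoke Eq.~\eqref{eq:invariants_from_perm}, which was proved in the subsection on permuting subsystems of copies without any restriction tying $k$ to $m$: for every $\ell \leq k$, every $\ell$-cycle $(\alpha_1\dots\alpha_\ell)\in S_k$, and every $A\in\AA$,
\begin{equation*}
\tr\bigl(\eta((\alpha_1\dots\alpha_\ell)^A)\,\varrho^{\otimes k}\bigr)
= \tr\bigl(\eta((\alpha_1\dots\alpha_\ell)^A)\,\varrho_A^{\otimes \ell}\bigr)
= \tr(\varrho_A^{\ell}).
\end{equation*}
Here the first equality uses the coordinate-free partial trace identity~\eqref{eq:ptrace} together with $\tr(\varrho)=1$ to collapse the $k-\ell$ copies outside $A$ and the copies where $\sigma^A_i=\id$. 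By hypothesis the spectrum of $\varrho_A$ equals $\mu_A$, hence $\tr(\varrho_A^\ell)=\sum_{\mu_i\in\mu_A}\mu_i^\ell = q_{A,\ell}$ by the definition~\eqref{eq:q_to_lambda}. This is exactly the required equality~\eqref{eq:nec_crit}.

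There is essentially no obstacle here — the proposition is a one-line corollary of the trace formula~\eqref{eq:invariants_from_perm}. The only minor point to flag explicitly is that~\eqref{eq:invariants_from_perm} holds for every $\ell\leq k$, not merely for $\ell$ equal to the spectrum size $m$ used in Proposition~\ref{prop:qmp1}; this is why the relaxed statement covers all cycle lengths up to $k$ simultaneously. Note the asymmetry with Proposition~\ref{prop:qmp1}: the converse direction (completeness of the relaxation) is \emph{not} claimed at this stage, since a general $\varrho_k$ need not have product form, and reconstructing a single-copy compatible $\varrho$ from its power sums is exactly what the subsequent semidefinite hierarchy is designed to test.
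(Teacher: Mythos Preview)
Your proposal is correct and matches the paper's approach: the paper simply remarks that the constraints in Proposition~\ref{prop:qmp2} are weaker than those in Proposition~\ref{prop:qmp1}, which amounts precisely to taking $\varrho_k = \varrho^{\otimes k}$ and invoking Eq.~\eqref{eq:invariants_from_perm} as you do. Your write-up is in fact more explicit than the paper's one-line justification.
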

It is clear that the constraints in Proposition~\ref{prop:qmp2} are weaker than those in Proposition~\ref{prop:qmp1}.

\begin{remark}\label{rmk:PPT}
  One could add the constraint of a positive partial transpose $\varrho_k^{T_R}  \geq0\,, \forall R \subseteq \{1,\dots, n\}$
  to Proposition~\ref{prop:qmp2}. 
  However, this approach is not directly suitable to the
  symmetry reduction method employed in this manuscript.
\end{remark}

\subsection{Invariance}\label{sec:invariance}
One can see that if $\varrho_k$ satisfies Eq.~\eqref{eq:nec_crit},
then so do the states in the set 
\begin{equation}\label{eq:U-inv}
    \big\{(U_1 \ot \dots \ot U_{n})^{\ot k} \,\varrho_k \, ((U_1 \ot \dots \ot U_{n})^\dag)^{\ot k}\,:\,
    U_1, \dots, U_n \in \UU(d) \big\}\,.
\end{equation}
This can be understood from the fact that the eigenvalues of a matrix are unitary invariants.
As a second invariance, also the states in
\begin{equation}\label{eq:S-inv}
     \big \{ \tau(\pi) \varrho_k \tau(\pi)^{-1}\,:\, \pi \in S_k \big\}\,,
\end{equation}
where $\tau(\pi)$ acts diagonally on $(\cdk)^{\ot n}$, satisfy Eq.~\eqref{eq:nec_crit}.
These are the symmetries of local unitary invariants (including local spectra).
We will use both symmetries in the next section to formulate an invariant hierarchy of semidefinite programs.

\section{SDP refutation}\label{sec:SDPref}

\subsection{Primal and dual programs}
We follow Watrous~\cite{https://doi.org/10.48550/arxiv.1207.5726} and Doherty, Parrilo, and Spedalieri~\cite{PhysRevA.69.022308} to recall: 
a semidefinite program (SDP) is specified by a 
hermiticity preserving linear map $\Xi: L(\XX) \to L(\YY)$ and Hermitian operators $C$ and $D$.
Define the inner product $\langle A,B \rangle = \tr(A^\dag B)$, 
and denote the set of positive and hermitian operators on a Hilbert space $\HH$ by $\Herm(\HH)$ and $\Pos(\HH)$, respectively.
Then the primal and dual of the semidefinite program read

\noindent\begin{minipage}{.5\linewidth}
\begin{align}\label{eq:primal}
 &\underline{\mathrm{Primal:}}    \nn\\
 &\underset{X}{\mathrm{maximize}} && \langle C, X \rangle \nn\\
 &\mathrm{such~that}              && \Xi(X) = D    \\
 &&& X \in \Pos(\XX) \nn
\end{align}
\end{minipage}
\noindent\begin{minipage}{.5\linewidth}
\begin{align}\label{eq:dual}
 &\underline{\mathrm{Dual:}}    \nn\\
 &\underset{Y}{\mathrm{minimize}} && \langle D, Y \rangle  \nn\\
 &\mathrm{such~that}              && \Xi^*(Y) \geq C \\
 &&& Y \in \Herm(\YY) \nn
\end{align}
\end{minipage}

\smallskip
Operators $X$ and $Y$ that meet the constraints of ~\eqref{eq:primal} and \eqref{eq:dual}
are said to be primal and dual feasible, respectively. Denote the set of primal and dual feasible operators by
$\PP$ and $\DD$. 
Every semidefinite program satisfies {\em weak duality}, 
that is, for all $X\in\PP$ and $Y\in\DD$,
\begin{align}\label{eq:weak-duality}
 \langle D, Y\rangle - \langle C, X \rangle
 = 
 \langle \Xi(X), Y\rangle - \langle C, X \rangle
 = 
 \langle \Xi^*(Y) - C, X\rangle \geq 0\,.
\end{align} 

Interestingly, weak duality ~\eqref{eq:weak-duality} can be used to give an {\em SDP refutation} for the feasibility ($C=0$) 
of a primal problem: 
if there exists a feasible $Y \in \DD$
with $\langle D, Y\rangle < 0$, 
weak duality~\eqref{eq:weak-duality} is violated. 
This implies that the primal problem is infeasible.
The operator~$Y$ then provides a certificate of infeasibility.

\subsection{Primal hierarchy}
Incorporating the symmetries ~\eqref{eq:U-inv} and ~\eqref{eq:S-inv},
Proposition~\ref{prop:qmp2} can be formulated as a hierarchy of semidefinite programs 
for feasibility ($C=0$), indexed by~$k \in \N$.
\begin{equation}\label{eq:hierarchy_primal}
\begin{aligned}
 &\underline{\mathrm{Primal:}}    \\
 &\underset{X}{\mathrm{maximize}} && \langle 0, X\rangle\\
 &\mathrm{such~that} 
   && \tr(X) = 1 \\
 & && \tr\big( \eta(\sigma^A) X \big)
    = q_{A, \sigma}                            &&\forall A \in \AA \,,\quad \sigma \in S_k\\
 & && X = \tau(\pi)  X \tau(\pi)^{-1}       && \forall \pi \in S_k\\
 & && X = \frU^{-1} X \frU                  && \forall \frU = (U_1 \ot \dots \ot U_n)^{\ot k} \,: U_1, \dots, U_n \in \UU(d)\\
 & && X \in \Pos(\HH^{\otimes k})
 \end{aligned}
\end{equation}

\begin{remark}
 Technically speaking, the optimization program \eqref{eq:hierarchy_primal} and also its dual \eqref{eq:hierarchy_dual} below
 are not semidefinite programs due to the appearance of
 infinitely many constraints of the form $X = \frU^{-1} X \frU$.
 However, these conditions determine the commutant of a set of operators which is a linear subspace.
 Thus, the conditions translate into finitely many constraints (see also Section~\ref{sec:symmetry}).
\end{remark}

Note that some elements in $S_k$, for example $(12)(34)$,
are of the form $\sigma \times \sigma^{-1}$,
where $\times$ denotes the direct group product.
For these, the corresponding
constraints are ``quadratic'':
$\tr\big( \eta(\sigma^A) \otimes \eta(\sigma^A)^\dagger X \big) = q_{A, \sigma\times \sigma^{-1}}  = q_{A, \sigma}^2$.
This will be relevant for completeness of the hierarchy, which we show in Theorem~\ref{thm:complete}.
For now, we return to the question of feasibility of this program.

\smallskip
In ~\eqref{eq:hierarchy_primal} and using the notation of ~\eqref{eq:primal},
we write 
\begin{equation}\label{eq:ximap}
 \Xi(X)   = 
 \bigoplus_{\Aell} \Xi^{A,\sigma}(X)
\end{equation}
where the hermitian maps $\Xi^{A, \sigma}$ and their duals are given by 
\begin{align}
\Xi^{A, \sigma}(X)                &= \frac{1}{2} \tr\big( (\eta(\sigma^A) + \eta(\sigma^A)^\dag) X\big) \,,\nn\\
{(\Xi^{A, \sigma})}^*(y^{A,\sigma}) &= \frac{1}{2} y^{A, \sigma} \big(\eta(\sigma^A) + \eta(\sigma^A)^\dag \big)\,, \label{eq:xi_primal}
\end{align}
with associated constants
$D^{A,\sigma} = q_{A, \sigma}$.

\subsection{Dual hierarchy}
 
Consider now the dual of the hierarchy in Eq.~\eqref{eq:hierarchy_primal}.
We first start by identifying the symmetries present in the dual.
The objective function of the dual program is
\begin{equation}
  \langle D, Y\rangle = \langle \Xi(X), Y\rangle = \langle X, \Xi^*(Y)\rangle\,.
\end{equation} 
We can now apply the symmetries of $X$ to see that 
\begin{align}
\begin{aligned}
\langle X, \Xi^*(Y)\rangle &= \langle  \tau(\pi)  X \tau(\pi)^{-1},  \Xi^*(Y)\rangle = \langle X , \tau(\pi)^{-1}  \Xi^*(Y) \tau(\pi) \rangle \,,\\
\langle X, \Xi^*(Y)\rangle &= \langle \frU X \frU^{-1}, \Xi^*(Y)\rangle = \langle X , \frU^{-1} \Xi^*(Y) \frU \rangle \,,
\end{aligned}
\end{align} 
holds for all $\pi \in S_k$ and unitaries of the form $\mathfrak{U} = (U_1 \ot \dots \ot U_{n})^{\ot k}$.

Thus, we can write dual of the hierarchy in Eq.~\eqref{eq:hierarchy_primal}, indexed by $k\in \N$, as
\begin{equation}\label{eq:hierarchy_dual}
  \begin{aligned}
 &\underline{\mathrm{Dual:}}    \\
 &\underset{y^{A, \sigma}}{\mathrm{minimize}}
 &&
 \!\!\!
 \sum_{\Aell}     y^{A, \sigma} q_{A, \sigma} \\
 &\mathrm{such~that}
              &&  \Xi^* = \tau(\pi)  \Xi^* \tau(\pi)^{-1}    &&\forall \pi \in S_k\\
              &&& \Xi^* = \frU \Xi^* \frU^{-1}               &&\forall \frU = (U_1 \ot \dots \ot U_n)^{\ot k} : U_1, \dots, U_n \in \UU(d) \\
              &&& \Xi^* \in \Pos(\HH^{\otimes k})
\end{aligned}
\end{equation}
where 
\begin{equation*}
 \Xi^* = \Xi^*(Y) = \sum_{\Aell}  y^{A, \sigma} \eta(\sigma^A)
\end{equation*} 

\begin{remark}\label{rem:noncrossperm}
We say that an element $\sigma = (\sigma_1, \dots, \sigma_n) \in \C S_k^n$ {\em factorizes}
if 
the operator 
$\eta(\sigma)$ factorizes along the copies where its cycles act.
Thus, factorizing permutations can be evaluated by polynomials in~$q_{A, \ell}$.
For example, $\sigma = ((12), (12)(34), (34)) \in S_4^3$ yields
\begin{equation}
 \tr( \eta(\sigma) \varrho^{\ot 4}) = \tr(\varrho_{AB}^2) \tr(\varrho_{BC}^2) = q_{AB,2} \cdot q_{BC,2}\,.
\end{equation}
The dual program~\eqref{eq:hierarchy_dual} can be strengthened by replacing the sum over $\ell$-cycles 
by a sum over factorizing permutations. 
This becomes only relevant when $k \geq 4$,
as one readily sees that all factorizing permutations are $\ell$-cycles for $k\leq 3$.
\end{remark}

\subsection{SDP refutation}
If for some $k \in \N$ the dual program~\eqref{eq:hierarchy_dual} is feasible with $\langle D, Y\rangle < 0$, 
then by violation of weak duality in Eq.~\eqref{eq:weak-duality}, the primal problem must be infeasible. Consequently, 
by Proposition~\ref{prop:qmp2}, the spectra corresponding to $q_{A, \ell}$ are incompatible.
For moderate sizes, such semidefinite programs can be solved by a computer.\footnote{In order for the dual program to be numerically bounded, 
one can change the dual to a feasibility problem with the constraint $\langle D, Y\rangle =-1$.}

The SDP refutation for detecting incompatibility of prescribed spectra can now be understood in simple terms: 
suppose there exists a density matrix $\varrho$ with 
$\tr(\eta((\a_1 \dots \a_\ell)^A) \varrho^{\ot k})$ $= q_{A, \ell}$.
If one finds a positive semidefinite operator $F = \Xi^*(Y)$ 
satisfying the conditions in ~\eqref{eq:hierarchy_dual}
and for which
\begin{equation}\label{eq:incompatibility_witness}
 \tr(F\varrho^{\ot k} ) 
 = 
 \sum_{\Aell}  y^{A, \sigma} \tr\big( \eta(\sigma^A) \varrho^{\ot k} \big)
 = 
 \sum_{\Aell}  y^{A, \sigma} \, q_{A, \sigma} < 0
\end{equation} 
holds, then one has arrived at a contradiction, because 
the trace inner product of two semidefinite operators must be non-negative.

\section{Symmetry-reduction}\label{sec:symmetry}

Consider the symmetries appearing in Eq.~\eqref{eq:hierarchy_dual},
\begin{align}\label{eq:symmetries2}
\tau(\pi) \Xi^*(Y) \tau(\pi)^{-1}   &= \Xi^*(Y) \quad \forall \pi \in S_k \nn\\
\frU\, \Xi^*(Y) \frU^{-1}              &= \Xi^*(Y) \quad
\forall \frU = (U_1 \ot \dots \ot U_n)^{\ot k} : U_1, \dots, U_n \in \UU(d)\,.
\end{align} 
From the Schur-Weyl duality it follows that the actions commute, $[\tau(\pi), \frU] = 0$.

Let us now decompose $\cdkn$ under these symmetries.
Consider the collection of the $k$ first subsystems. 
By the Schur-Weyl duality, the space $\cdk$ decomposes as
\begin{equation}\label{eq:SW}
 \cdk \simeq \bigoplus_{\substack{\lambda \vdash k \\ \height(\lambda) \leq d}} \UU_\lambda \ot \SS_\lambda\,,
\end{equation}
where the unitary group acts on $\UU_\lambda$ and the symmetric group on $\SS_\lambda$.
Consequently, 
\begin{equation}\label{eq:SWn}
(\cdk)^{\otimes n} \simeq 
\bigotimes_{i=1}^n 
\Big(
\bigoplus_{\substack{\lambda_i \vdash k \\ \height(\lambda_i) \leq d}} 
\UU_{\lambda_i} \ot \SS_{\lambda_i}
\Big)  \,.
\end{equation}
An operator $X$ on $(\cdk)^{\otimes n}$ that is invariant under the symmetries~\eqref{eq:symmetries2} 
will have the form
\begin{equation}
X = \bigotimes_{i=1}^n 
\Big(
\bigoplus_{\substack{\lambda_i \vdash k \\ \height(\lambda_i) \leq d}} 
\one_{\lambda_i} \ot X_{\lambda_i}
\Big)  \,.
\end{equation}
Then $X\geq 0$ if and only if $X_{\lambda_1} \ot \dots \ot X_{\lambda_n} \geq 0$ for all $\lambda_1, \dots, \lambda_n \vdash k$ with $\height(\lambda_i) \leq d$.

Denote by $R_\lambda(\sigma)$ an irreducible orthogonal representation of $\sigma$ corresponding to the partition $\lambda \vdash k$.
For $ \sigma = (\sigma_1, \dots, \sigma_n) \in S_k^n$,
denote similarly 
\begin{equation}
 R_{\lambda_1, \dots, \lambda_n}(\sigma) := R_{\lambda_1}(\sigma_1) \ot \dots \ot R_{\lambda_n}(\sigma_n)\,.
\end{equation} 
 
We then have the following symmetry-reduction.
\begin{proposition}\label{prop:sym_red_dual}
In the dual program~\eqref{eq:hierarchy_dual}, it holds that
\begin{equation}
 \Xi^*(Y) = 
 \sum_{\Aell}  y^{A, \sigma} \eta^d(\sigma^A)
 \geq 0
\end{equation} 
if and only if
\begin{equation}
 F_{\lambda_1, \dots, \lambda_n} =  
 \sum_{\Aell}  y^{A, \sigma} R_{\lambda_1, \dots, \lambda_n}(\sigma^A )
 \geq 0
\end{equation}
for all $\lambda_1,\dots, \lambda_n \vdash k$ with $\height(\lambda_i) \leq d$.
\end{proposition}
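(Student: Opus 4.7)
The plan is to block-diagonalize $\Xi^*(Y)$ by means of the Schur--Weyl decomposition~\eqref{eq:SWn} applied to each of the $n$ party systems, and then use Lemma~\ref{lem:pos_rep} to transport positivity from the Specht-module representation that appears naturally in Schur--Weyl to Young's orthogonal form $R_\lambda$.

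First, applying Schur--Weyl on each party independently gives
\begin{equation*}
  \cdkn \simeq \bigoplus_{\lambda_1,\dots,\lambda_n} \big(\UU_{\lambda_1} \ot \cdots \ot \UU_{\lambda_n}\big) \ot \big(\SS_{\lambda_1} \ot \cdots \ot \SS_{\lambda_n}\big),
\end{equation*}
where each $\lambda_i \vdash k$ satisfies $\height(\lambda_i) \leq d$. Since $\eta(\sigma^A) = \eta_d(\sigma_1^A) \ot \cdots \ot \eta_d(\sigma_n^A)$ commutes with the local unitary action, each factor $\eta_d(\sigma_i^A)$ must act, on the $\lambda_i$-isotypic piece, as $\one_{\UU_{\lambda_i}} \ot \rho_{\lambda_i}(\sigma_i^A)$, where $\rho_{\lambda_i}$ denotes the representation of $S_k$ induced by $\eta_d$ on the Specht module $\SS_{\lambda_i}$. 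Taking the $n$-fold tensor product and summing against $y^{A,\sigma}$ yields
\begin{equation*}
  \Xi^*(Y) \simeq \bigoplus_{\lambda_1,\dots,\lambda_n} \one_{\UU_{\lambda_1} \ot \cdots \ot \UU_{\lambda_n}} \ot G_{\lambda_1,\dots,\lambda_n}, \qquad G_{\lambda_1,\dots,\lambda_n} = \sum_{\Aell} y^{A,\sigma}\, \rho_{\lambda_1}(\sigma_1^A) \ot \cdots \ot \rho_{\lambda_n}(\sigma_n^A).
\end{equation*}

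Second, a block-diagonal operator is positive if and only if each block is, and $\one \ot G \geq 0$ iff $G \geq 0$, so the claim reduces to $G_{\lambda_1,\dots,\lambda_n} \geq 0 \iff F_{\lambda_1,\dots,\lambda_n} \geq 0$. Both $\rho_{\lambda_1} \ot \cdots \ot \rho_{\lambda_n}$ and $R_{\lambda_1} \ot \cdots \ot R_{\lambda_n}$ are representations of the finite group $S_k^n$ whose $i$-th tensor factor is the irreducible of $S_k$ indexed by $\lambda_i$, so they are isomorphic. They are both orthogonal: $R_{\lambda_i}$ is orthogonal by construction, $\rho_{\lambda_i}$ becomes orthogonal once the Schur--Weyl isomorphism is chosen unitary on an orthonormal Specht basis, and tensor products of orthogonal representations are orthogonal. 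Lemma~\ref{lem:pos_rep} with $G = S_k^n$ and $a = \sum_{A,\sigma} y^{A,\sigma}\, \sigma^A \in \C S_k^n$ then delivers the equivalence.

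The main obstacle I foresee lies in this last step: I must fix the orthogonal realization of $\rho_{\lambda_i}$ carefully enough that Lemma~\ref{lem:pos_rep} is actually applicable, and then check that moving to the tensor-product representation of $S_k^n$ preserves orthogonality. Both points are essentially bookkeeping --- the permutation operators $\eta_d(\sigma)$ are real orthogonal in the computational basis, and an orthonormal Specht basis can be selected via Young's seminormal construction --- but without them the lemma cannot be invoked. Once these choices are fixed, both directions of the equivalence follow symmetrically.
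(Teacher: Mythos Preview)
Your proposal is correct and follows the same route as the paper: block-diagonalize $\Xi^*(Y)$ via the Schur--Weyl decomposition~\eqref{eq:SWn} so that positivity reduces to positivity on each $(\lambda_1,\dots,\lambda_n)$-block, then invoke Lemma~\ref{lem:pos_rep} to pass from the natural Specht-module realization to Young's orthogonal form $R_{\lambda_1,\dots,\lambda_n}$. The paper's own proof compresses this into two sentences, leaving the details you spell out (the explicit block structure $\one\ot G_{\lambda_1,\dots,\lambda_n}$, the orthogonality of the Specht realization, and the application of the lemma with $G=S_k^n$) implicit; your worry in the last paragraph is not a genuine obstacle, since $\eta_d$ is already a real orthogonal representation and its restriction to any invariant subspace with an orthonormal basis remains orthogonal.
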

\begin{proof}
 The variable $\Xi^*(Y)$ is positive semidefinite if and only if it is positive semidefinite in each of its isotypic components.
 By \eqref{eq:SWn}, the isotypic components of $\eta$ are labeled by the partitions
 $\lambda_1,\dots, \lambda_n \vdash k$ with $\height(\lambda_i) \leq d$.
 Both $\eta$ and $R_{\lambda_1, \dots, \lambda_n}$ are orthogonal representations and thus $\ast$-algebras.
 Thus, the map
 \begin{equation}
  \phi: \eta \to \bigoplus_{\substack{\lambda_i \vdash k \\ \height(\lambda_i) \leq d}} R_{\lambda_1, \dots, \lambda_n}
 \end{equation}
 is a $\ast$-isomorphism.
 As $\ast$-homomorphisms between $\ast$-algebras preserve positive semidefiniteness~\cite{Bachoc2012},
 this proves the claim.
\end{proof}

Proposition~\ref{prop:sym_red_dual} allows to find SDP refutations with a fewer number of variables but of equal strength than the 
naive approach of Eq.~\eqref{eq:hierarchy_dual}. 
This symmetry-reduced hierarchy can be stated as
\begin{equation}\label{eq:hierarchy_dual_symred}\tag{SDP-SC}
  \begin{aligned}
 &\underset{\{y^{A, \sigma}\}}{\mathrm{minimize}} &&
 \!\!\!\sum_{\Aell}  \!\!\!  y^{A, \sigma} q_{A, \sigma}\\
 &\mathrm{such~that}
 && \sum_{\Aell}  y^{A, \sigma} R_{\lambda_1, \dots, \lambda_n}(\sigma^A ) \geq 0 
 && \forall \lambda_1,\dots, \lambda_n \vdash k\,: \height(\lambda_i) \leq d,
\end{aligned}
\end{equation}
and we obtain:
\begin{theorem}\label{thm:hierarchy_dual_symred}
Let $\AA$ be a collection of subsets of $\{1,\dots,n\}$
with prescribed spectra of reductions $\mu = \{ \mu_A \, |\, A \in \AA \}$.
If a level in the hierarchy~\eqref{eq:hierarchy_dual_symred}
returns a negative value, then the spectra are incompatible with a joint quantum state on $(\C^d)^{\ot n}$.
\end{theorem}
\begin{proof}
 Proposition~\ref{prop:qmp2} states a necessary condition for spectral compatibility.
 Proposition~\ref{prop:sym_red_dual} allows for a symmetry-reduction of the corresponding SDP formulation ~\eqref{eq:hierarchy_primal}.
 A negative value in ~\eqref{eq:hierarchy_dual_symred} violates weak duality~\eqref{eq:weak-duality}. 
 Consequently, the putative marginal spectra 
 $\{ \mu_A \, |\, A \in \AA \}$ are then incompatible on $(\C^d)^{\ot n}$.
\end{proof}
Note that the symmetry-reduced hierarchy (\ref{eq:hierarchy_dual_symred}) is equivalent to the program (\ref{eq:hierarchy_dual})
while having a smaller number of variables.

\subsection{Scaling}

Given a collection of partitions $(\lambda_1, \dots, \lambda_n)$, the associated irreducible representation has dimension
$\prod_{i=1}^n \chi_{\lambda_i}(\id)$.
A Hermitian matrix of size $N\times N$ has $N^2$ real variables.
Accordingly, the symmetry-reduced SDP contains
\begin{equation}
 \frac{1}{2} \sum_{\substack{\lambda_1, \dots, \lambda_n \vdash k \\ \height(\lambda_i) \leq d}}
 \prod_{i=1}^n \chi_{\lambda_i}^2(\id)
\end{equation}
real variables.
Table~\ref{tab:dims} shows the relative growth of the naive unsymmetrized SDP versus that of the symmetrized SDP.

\ytableausetup{smalltableaux}
\begin{example}
 Consider three copies of a three-qubit state with associated space $((\C^2)^{\ot 3})^{\ot 3}$. 
 Under the action of $\UU(2)$, 
 the space $(\C^2)^{\ot 3}$ decomposes into irreducible representations (irreps)
 associated to the partitions $3=3$ and $2+1=3$,
 whose dimensions are $1$ and $2$, respectively.
 Thus, the full space carries the irreps
 \begin{align}
 \mathrm{irreducible} & \mathrm{~representation} & \mathrm{dimension} \nn\\
  \ydiagram{3}  &\ot \ydiagram{3}   \ot \ydiagram{3}    & 1\cdot 1 \cdot 1 = 1 \nn\\
  \ydiagram{3}  &\ot \ydiagram{3}   \ot \ydiagram{2,1}  & 1\cdot 1 \cdot 2 = 2 \nn\\
  \ydiagram{3}  &\ot \ydiagram{2,1} \ot \ydiagram{2,1}  & 1\cdot 2 \cdot 2 = 4 \nn\\
  \ydiagram{2,1} &\ot \ydiagram{2,1} \ot \ydiagram{2,1} & 2\cdot 2 \cdot 2 = 8
 \end{align}
 as well as permutations thereof. 
The total number of real variables in the symmetry-reduced space above is $125$,
fewer than the $(2^9)^2 = 262144$ real variables required for an SDP of nine qubits.
\end{example}

\section{Dimension-free incompatibility}\label{sec:dimfree}

We now show that when $k\leq d$, 
the incompatibility witnesses found by the hierarchy~\eqref{eq:hierarchy_dual_symred} are {\em dimension-free}.
That is, they certify incompatibility of spectra of joint states in arbitrary local dimensions.

\begin{theorem}\label{thm:dimfree}
When the number of copies is less or equal to the local dimension ($k\leq d$),
the incompatibility witnesses produced by the hierarchy~\eqref{eq:hierarchy_dual_symred} 
are dimension-free
and the detected spectra are incompatible in all local dimensions.
\end{theorem}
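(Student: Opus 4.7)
The plan is to exploit the fact that the height constraint appearing in the hierarchy~\eqref{eq:hierarchy_dual_symred} becomes vacuous exactly when $k\leq d$. The constraints of \eqref{eq:hierarchy_dual_symred} are indexed by tuples $(\lambda_1,\dots,\lambda_n)$ of partitions of $k$ subject only to the dimension-dependent cutoff $\height(\lambda_i)\leq d$. Since any partition $\lambda\vdash k$ has at most $k$ parts, i.e. $\height(\lambda)\leq k$, the condition $k\leq d$ implies that \emph{every} partition of $k$ is admissible. In this regime the constraint set of the SDP is as large as it can possibly be.

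Concretely, I would proceed as follows. First, fix a feasible dual witness $\{y^{A,\sigma}\}$ produced by~\eqref{eq:hierarchy_dual_symred} at local dimension $d$, with objective value $\sum_{A,\sigma} y^{A,\sigma}q_{A,\sigma}<0$. Observe that the objective only depends on the prescribed spectra $\mu_A$ via the power-sums $q_{A,\ell}$ of Eq.~\eqref{eq:q_to_lambda} (or their factorized versions), which are intrinsic to the nonzero part of each $\mu_A$ and do not involve $d$ at all; padding the spectra by zeros in larger dimensions leaves every $q_{A,\sigma}$ unchanged. Next, let $d'$ be any other local dimension. The constraints of~\eqref{eq:hierarchy_dual_symred} for dimension $d'$ range over the collection
\begin{equation*}
\Lambda(d',k)=\{(\lambda_1,\dots,\lambda_n)\,:\,\lambda_i\vdash k,\ \height(\lambda_i)\leq d'\}\,,
\end{equation*}
and, because $k\leq d$, the collection $\Lambda(d,k)$ is the full set of all such tuples. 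Hence $\Lambda(d',k)\subseteq\Lambda(d,k)$ for every $d'$, regardless of whether $d'<d$ or $d'>d$.

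Consequently, the positivity conditions
\begin{equation*}
\sum_{\Aell} y^{A,\sigma}R_{\lambda_1,\dots,\lambda_n}(\sigma^A)\geq 0
\end{equation*}
that hold for the witness at dimension $d$ automatically hold for every $(\lambda_1,\dots,\lambda_n)\in\Lambda(d',k)$. Together with the dimension-independence of the objective, this shows that $\{y^{A,\sigma}\}$ is feasible in~\eqref{eq:hierarchy_dual_symred} at dimension $d'$ with the same negative value. Finally, Proposition~\ref{prop:qmp2} (and the weak-duality refutation argument of Section~\ref{sec:SDPref}) is valid in arbitrary local dimension, so the negativity of the objective certifies incompatibility of $\{\mu_A\,|\,A\in\AA\}$ on $(\C^{d'})^{\ot n}$.

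I do not expect any serious obstacle: the statement is essentially a monotonicity argument on the index set of the symmetry-reduced SDP. The only point that deserves a careful sentence is that the quantities $q_{A,\sigma}$ truly do not carry any hidden dependence on $d$; once that is acknowledged, dimension-freeness follows directly from $\height(\lambda)\leq k\leq d$ for all $\lambda\vdash k$. As a corollary, the refutations stabilize at $k=d$: increasing $d$ beyond $k$ cannot enlarge $\Lambda(d,k)$, so any witness obtained at $k=d$ certifies incompatibility in every larger local dimension as well.
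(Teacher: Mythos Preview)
Your argument is correct and, in fact, cleaner than the paper's. You work directly with the symmetry-reduced program~\eqref{eq:hierarchy_dual_symred} and observe that when $k\leq d$ the height restriction $\height(\lambda_i)\leq d$ is vacuous, so the constraint index set $\Lambda(d,k)$ is maximal and hence contains $\Lambda(d',k)$ for \emph{every} $d'$; together with the dimension-independence of the objective $q_{A,\sigma}$, this handles $d'<d$ and $d'>d$ in one stroke.

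The paper instead passes back to the unreduced program~\eqref{eq:hierarchy_dual} and treats the two cases separately. For $d'>d$ it argues that $k\leq d$ makes $\eta_d$ faithful on $\C S_k^n$ (via Schur--Weyl), so $\eta_d(f)\geq 0$ forces a factorization $f=aa^*$ in the group algebra and hence $\eta_{d'}(f)\geq 0$; for $d'<d$ it invokes the Hilbert-space embedding $(\C^{d'})^{\ot n}\hookrightarrow(\C^d)^{\ot n}$ to say that compatibility in the smaller dimension would imply compatibility in the larger. Your monotonicity-of-constraints viewpoint is really the symmetry-reduced shadow of the paper's faithfulness argument (``all $\lambda\vdash k$ appear'' is precisely ``$\eta_d$ kills no irreps''), but it avoids the $C^*$-algebra detour and the asymmetric case split. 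What the paper's route buys is an explicit operator-level statement: the same abstract element $f\in\C S_k^n$ yields a positive witness $\eta_{d'}(f)$ in every dimension, which is conceptually nice if one wants to exhibit the witness as an operator rather than as a collection of coefficients $\{y^{A,\sigma}\}$.
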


\begin{proof}
\noindent Let an incompatibility witness for dimension $d$ using $k$ copies be given.

First, consider the case of local dimension $d'>d$: 
Recall that the program~\eqref{eq:hierarchy_dual_symred} is equivalent to the program~\eqref{eq:hierarchy_dual}.
Now, consider a given incompatibility witness, i.e., a feasible
$ F = \Xi^*(Y)$
satisfying \eqref{eq:hierarchy_dual} with negative objective function.
It can be written as $F = \eta^d(f)$ with $f \in \C S_k^n$.
Because of $k \leq d$ and the Schur-Weyl decomposition~\eqref{eq:SWn}, 
$f \in \ker(\eta^d)^\perp$.
This implies two things:
First, 
because $F\geq 0$, there is an element $a \in \C S_k^n$ such that $f = aa^*$.
Consequently, if $F \geq 0$ then also $F' = \eta^{d'}(f) =\eta^{d'}(aa^*) \geq 0$ for all $d'$.
Second, the decomposition of $F$ and $F'$ into permutations is identical.
Thus, the expectation values 
$\tr\big(F \varrho^{\ot k} \big)$ 
and 
$\tr\big(F' {\varrho'}^{\ot k} \big)$ 
coincide for $\varrho$  and $\varrho'$ with spectra $\mu$.
Thus, if $F$ is an infeasibility certificate for spectra $\mu$ in dimension $d$
then $F'$ is an infeasibility certificate for spectra $\mu$ in dimension $d'$.

Now, we consider the case $d'<d$: 
Through the direct sum $\C^d = \C^{d'} \oplus \C^{(d-d')}$, the space $(\C^{d'})^{\ot n}$ embeds into $(\C^{d})^{\ot n}$.
Clearly, spectral compatibility in the smaller space $(\C^{d'})^{\ot n}$ implies compatibility in the larger space $(\C^{d})^{\ot n}$.
Consequently, incompatibility in $(\C^{d})^{\ot n}$ implies incompatibility in $(\C^{d'})^{\ot n}$.

Thus, if $k\leq d$ and
$F = \eta^{d}(f)$ certifies for spectra to be incompatible with a joint state on $(\C^{d})^{\ot n}$,
then the same spectra are also incompatible on $(\C^{d'})^{\ot n}$ with $d' \in \N_+$.
\end{proof}

For numerical calculations, this dimension-free property can be helpful:
any incompatibility witness found, as long as $k\leq d$, will certify the spectra
to be incompatible with a joint state with any local Hilbert space dimensions.

\section{Completeness and convergence}\label{sec:comp}

\subsection{Completeness}

We now show that the hierarchy~(\ref{eq:hierarchy_primal}) 
is complete, that is, feasible 
at every level of the hierarchy if and only if the spectra are compatible. 
For this, we use a strategy similar to that in a recent work by Ligthart and Gross~\cite{10.1063/5.0143792}, where 
de Finetti together with ``quadratic constraints'' yields completeness.\footnote{
We thank Laurens T. Ligthart for explaining to us their proof.}

The quantum de Finetti theorem states \cite{Christandl2007}: 
Suppose $\varrho_t \in L((\C^D)^{\ot t})$ is permutation-invariant and infinitely symmetrically extendable, 
that is, 
there exists $\varrho_k \in L((\C^D)^{\ot k})$ for every $k > t$, such that 
\begin{equation*}
 \quad\quad \tr_{k-t} (\varrho_k) = \varrho_t\,,
 \quad\quad\quad
 \tau(\pi) \varrho_k \tau(\pi)^{-1} = \varrho_k\,, \forall \pi \in S_k\,.
\end{equation*} 
Then 
\begin{equation*}
 \varrho_t = \int \varrho^{\ot t} \text{d}m(\varrho)\,.
\end{equation*}
for a measure $m$ on the set of states in $L(\C^D)$.

\begin{theorem}\label{thm:complete}
 The marginal spectra $\{ \mu_A \, |\, A \in \AA \}$ are compatible with a joint quantum state on $(\C^d)^{\ot n}$,
 if and only if 
 every level of the hierarchy~\eqref{eq:hierarchy_primal} is feasible.
\end{theorem}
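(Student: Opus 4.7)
The ``only if'' direction is immediate: a compatible joint state $\varrho$ makes $X = \varrho^{\otimes k}$ feasible at every level by Eq.~\eqref{eq:invariants_from_perm}. For the converse, I would combine a compactness/de Finetti argument with the ``quadratic'' constraints already built into the primal program.

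Suppose the hierarchy~\eqref{eq:hierarchy_primal} is feasible at every $k$, and choose a feasible $\varrho_k \in L(\HH^{\otimes k})$ for each $k$. Since $L(\HH^{\otimes t})$ is finite-dimensional and the partial traces $\tr_{k-t}(\varrho_k)$ are density operators, a diagonal subsequence argument produces $k_m \to \infty$ such that $\tilde\varrho_t := \lim_{m\to\infty} \tr_{k_m - t}(\varrho_{k_m})$ exists for every $t \in \N$. The limit family is automatically consistent, $\tr_{s-t}(\tilde\varrho_s) = \tilde\varrho_t$, and $\tau(\pi)$-symmetric under $S_t$, so the quantum de Finetti theorem quoted above furnishes a probability measure $m$ on density operators on $\HH = (\C^d)^{\otimes n}$ with
\[
\tilde\varrho_t = \int \varrho^{\otimes t}\, dm(\varrho) \qquad \forall\, t \in \N.
\]
All linear and quadratic constraints of the hierarchy pass to the limit by continuity of the trace.

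I would then use the quadratic constraints to localize the support of $m$. For each $A \in \AA$ and $\ell \in \{1, \dots, d\}$, let $\sigma_1, \sigma_2 \in S_{2\ell}$ be disjoint $\ell$-cycles acting on complementary halves of the $2\ell$ copies. Their product $\sigma_1 \sigma_2$ factorizes in the sense of Remark~\ref{rem:noncrossperm}, so the primal at level $2\ell$ prescribes $q_{A,\sigma_1 \sigma_2} = q_{A,\ell}^2$. Substituting the de Finetti expansion at $t = 2\ell$ and using the factorization $\tr(\eta((\sigma_1\sigma_2)^A) \varrho^{\otimes 2\ell}) = \tr(\varrho_A^\ell)^2$ gives
\[
\int \tr(\varrho_A^\ell)^2\, dm(\varrho) = q_{A,\ell}^2, \qquad \int \tr(\varrho_A^\ell)\, dm(\varrho) = q_{A,\ell},
\]
the second identity coming from the linear constraint at level $\ell$. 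The Cauchy--Schwarz/Jensen equality condition then forces $\tr(\varrho_A^\ell) = q_{A,\ell}$ for $m$-almost every $\varrho$. Intersecting the finitely many full-measure sets over $\ell \in \{1,\dots,d\}$ and $A \in \AA$ produces a single $\varrho$ in $\mathrm{supp}(m)$ whose every marginal $\varrho_A$ has all the correct power sums, hence the prescribed spectrum $\mu_A$. This $\varrho$ is the desired compatible joint state.

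The main obstacle I anticipate is bookkeeping around the quadratic constraints: one must check carefully that the identity $q_{A,\sigma_1 \sigma_2} = q_{A,\ell}^2$ is actually imposed as an explicit equality constraint of the primal (rather than being automatic only for product states $X = \varrho^{\otimes k}$), and that the Jensen rigidity step produces a single $\varrho$ working simultaneously for all $(A,\ell)$ rather than one per constraint.
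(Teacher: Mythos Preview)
Your proposal is correct and follows the same strategy as the paper: apply de Finetti and then use the quadratic constraints of the primal to force $\tr(\varrho_A^\ell)=q_{A,\ell}$ for $m$-a.e.\ $\varrho$; the paper phrases the last step as the vanishing of $\int\big|\tr\big((\eta(\sigma^A)-q_{A,\ell})\varrho^{\ot\ell}\big)\big|^2\,dm(\varrho)$, which is exactly your variance-zero/Jensen equality rewritten. Your diagonal-subsequence step, producing a \emph{consistent} infinitely exchangeable family before invoking de Finetti, is a detail the paper glosses over, so your argument is in fact more careful there. One small bookkeeping fix on the point you flagged: the range of $\ell$ must go up to the size of the prescribed spectrum $\mu_A$ (at most $d^{|A|}$), not merely $d$; with that correction the intersection-of-full-measure-sets step goes through unchanged.
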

\begin{proof}
``$\Longrightarrow$'': if the marginal spectra are compatible with a joint quantum state,
then $X_k = \varrho^{\otimes k}$ is feasible due to Proposition~\ref{prop:qmp2}.\\
``$\Longleftarrow$'':
Suppose the hierarchy~\eqref{eq:hierarchy_primal} is
is feasible at every level $k\in \N$. In particular, if level $k$ is feasible, then also level $k-1$.
Thus, there exists a sequence of feasible $\{\tilde X_k\}_{k=1}^\infty$,
such that $\tilde X_{k-1} = \tr_k(\tilde X_k)$
holds at every level of the hierarchy and $\tau(\pi) X_k \tau(\pi)^{-1} = X_k$
for all $k \in S_k$.
This means that every $\tilde X_k$ is infinitely symmetrically extendable.

Now consider the permutation $\sigma=(1...\ell)$ for $\ell \leq \lfloor \tfrac{k}{2}\rfloor$.
Then $\tilde X_k$ fulfils
the constraints that appear in the primal hierarchy \eqref{eq:hierarchy_primal} of the form
\begin{align}
\tr(\eta(\sigma^A) \tilde X_k) = q_{A, \sigma} = q_{A, \ell}\,, \nn\\
\tr(\eta(\sigma^A)\otimes {\eta(\sigma^A)}^\dag \tilde X_k) = q_{A, \sigma}^2 = q_{A,\ell}^2\,,
\end{align}
where we understand $\eta(\sigma^A)^\dag$ appearing above to act on a disjoint set of $\ell$ tensor factors
(e.g., on tensor factors $\ell +1$ to $2\ell$).
As a consequence,
\begin{align}\label{eq:quad1}
 \tr \big(
(\eta(\sigma^A) - q_{A, \sigma}\one)
\ot
( \eta(\sigma^A) - q_{A, \sigma}\one)
\tilde X_k \big)^\dag = 0\,.
\end{align}
As $\tilde X_k$ is infinitely symmetrically extendable, by
the quantum de Finetti theorem, the reduction of $\tilde X_k$ onto size $2\ell$ is separable as
\begin{equation}
 \tr_{k-2\ell} (\tilde X_k) = \int \varrho^{\ot 2\ell} \text{d}m(\varrho)\,.
\end{equation}
Then the constraint of Eq. \eqref{eq:quad1} factorizes as
\begin{align}\label{eq:square}
 &\int
 \tr \big(
(\eta(\sigma^A) - q_{A, \sigma}) \varrho^{\ot \ell}
\big)
\cdot
\tr \big(
(\eta(\sigma^A) - q_{A, \sigma})^\dag \varrho^{\ot \ell}
\big) \text{d}m(\varrho) \nn\\
&=
 \int
 {\big|
 \tr \big(
(\eta(\sigma^A) - q_{A, \sigma}) \varrho^{\ot \ell}
\big)
\big|^2} \text{d}m(\varrho) = 0
\end{align}
 This implies that $\tr(\eta(\sigma^A) \varrho^{\ot \ell}) = q_{A, \sigma}$ almost everywhere (w.r.t.~$m$), which means that there is a subset of quantum states in $L(\C^D)$ that is of full measure (w.r.t.~$m$) for which $\tr(\eta(\sigma^A) \varrho^{\ot \ell}) = q_{A, \sigma}$ is fulfilled exactly.
 The same reasoning holds for all $\sigma^A$ that constrain the spectrum.
 Consequently, the primal  hierarchy \eqref{eq:hierarchy_primal} is feasible for every $k \in \N$, if and only if a state compatible with the marginal spectra $\{ \mu_A \, |\, A \in \AA \}$ exists.
\end{proof}

\subsection{Convergence}

Suppose the primal SDP is feasible up to level $k$ in the hierarchy.
What guarantee   can be given for a state $\varrho$ to exist whose moments $\tr(\varrho^\ell_A)$
are close to the desired ones $q_{A,\ell}$?
A finite version of the quantum de Finetti theorem states that,
if the primal problem is feasible up to some level $k$ of the hierarchy,
then the state $\tr_{k-t}(\varrho_k)$ is close to a separable state~\cite{Christandl2007}:
Suppose $\varrho_t \in L((\C^d)^{\ot t})$ is permutation-invariant and symmetrically extendable for some $k > t$.
Then there exists a measure $m$ on the set of states in $L(\C^d)$, such that
\begin{equation}
 \Vert \varrho_t - \int \varrho^{\ot t} dm(\varrho) \Vert_1 \leq  \frac{2d^2t}{k}\,,
\end{equation}
where $\Vert X \Vert_1 = \frac12\tr|X| =  \frac12\tr \sqrt{X^\dag X}$ is the trace norm of $X$.
This allows us to show the following:
\begin{corollary}\label{cor:finitecomplete}
Let $\AA$ be a collection of subsets of $\{1,\dots,n\}$
with associated marginal spectra $\{ \mu_A \, |\, A \in \AA \}$.
If the level $k$ in the hierarchy~\eqref{eq:hierarchy_primal} is feasible, then there exists a state~$\varrho$ on $(\C^d)^{\ot n}$ such that for all $2\leq \ell \leq \lfloor \frac{k}2 \rfloor$,
\begin{align}
\vert \tr(\varrho_A^\ell) - q_{A,\ell}\vert^2 \leq \frac{12|\AA|d^{2n}}{k}(\ell - 1) (\ell + 2)\,.
\end{align}
\end{corollary}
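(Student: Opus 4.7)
The plan is to quantitatively repeat the proof of Theorem~\ref{thm:complete}, replacing the asymptotic de Finetti theorem by the finite bound recalled just above the corollary. Let $\varrho_k$ be a feasible primal state at level $k$. For each $\ell$ with $2\ell \leq k$, the factorising permutation $(1\ldots\ell)(\ell{+}1\ldots 2\ell)^{-1}\in S_k$ (cf.\ Remark~\ref{rem:noncrossperm}) enforces the quadratic identity
\[
\tr\!\bigl((\eta((1\ldots\ell)^A) - q_{A,\ell}\one)\otimes(\eta((1\ldots\ell)^A) - q_{A,\ell}\one)^\dag \,\varrho_k\bigr) = 0,
\]
already used in the analogous step of Theorem~\ref{thm:complete}. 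Applying the finite quantum de Finetti theorem with total dimension $D=d^n$ and depth $t=2\ell$ furnishes a probability measure $m_\ell$ on states of $(\C^d)^{\otimes n}$ with
\[
\Bigl\|\tr_{k-2\ell}(\varrho_k) - \int \varrho^{\otimes 2\ell}\, dm_\ell(\varrho)\Bigr\|_1 \;\leq\; \frac{4\,d^{2n}\,\ell}{k}.
\]

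The next step is to transport the quadratic identity above to the de Finetti approximant via the H\"older bound $|\tr(MN)|\leq 2\|M\|_\infty\|N\|_1$ (matching the $\tfrac12\tr|{\cdot}|$ normalisation), and to factorise the resulting expectation, giving
\[
\int \bigl|\tr(\varrho_A^\ell) - q_{A,\ell}\bigr|^2\, dm_\ell(\varrho) \;\leq\; \frac{8\,d^{2n}\,\ell\,\|M_{A,\ell}\|_\infty}{k},
\]
with $M_{A,\ell}$ the observable appearing in the quadratic identity. Since $\eta((1\ldots\ell)^A)$ is unitary with spectrum inside the $\ell$-th roots of unity and $0 \leq q_{A,\ell} \leq 1$, elementary spectral calculus yields $\|M_{A,\ell}\|_\infty\leq (1+q_{A,\ell})^2 \leq 4$. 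To produce a \emph{single} state $\varrho$ working for every $(A,\ell)$ simultaneously, I couple the measures $\{m_\ell\}$ through their common origin in $\varrho_k$: the largest, $m_{\ell^\star}$ with $\ell^\star = \lfloor k/2\rfloor$, marginalises under further partial trace to approximants at every smaller scale with the same $O(d^{2n}\ell/k)$ error. Summing the per-pair bound over $A\in\AA$ with weights proportional to $1/\bigl((\ell-1)(\ell+2)\bigr)$ and invoking Markov's inequality then yields at least one $\varrho$ in the support of the coupled measure that obeys $|\tr(\varrho_A^\ell) - q_{A,\ell}|^2 \leq \tfrac{12|\AA|\,d^{2n}}{k}(\ell-1)(\ell+2)$ for all $A\in\AA$ and all $2\leq\ell\leq\lfloor k/2\rfloor$ at once.

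The hard part will be sharpening the na\"ive linear-in-$\ell$ upper bound implied by the crude estimate $\|M_{A,\ell}\|_\infty\leq 4$ into the precise quadratic $(\ell-1)(\ell+2)$ dependence with matching constant $12$. This demands a refined operator-norm (or variance) computation that tracks the exact spectral decomposition of $M_{A,\ell}$ as a linear combination of permutation operators --- expanding $M_{A,\ell}$ into $\ell$-cycle, $\ell$-cycle$^{-1}$, and identity components --- rather than using a single uniform bound, together with a careful choice of Markov weights so that every individual $(A,\ell)$-inequality inherits the polynomial in $\ell$ and not merely its average over the hierarchy. I expect this refined bookkeeping --- essentially an exact computation of the second moment of $\tr(\varrho_A^\ell)$ under $m_\ell$, using that both its first and second moments are pinned by the primal constraints up to the de Finetti error --- to be the principal technical obstacle.
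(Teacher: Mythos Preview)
Your strategy---quadratic constraints from factorising permutations, finite de Finetti, then an averaging argument---is exactly the paper's. But you have misidentified the ``hard part''. The factor $(\ell-1)(\ell+2)$ does not come from any refined spectral analysis of $M_{A,\ell}$; it is the arithmetic identity $\sum_{j=2}^{\ell} j = \tfrac{(\ell-1)(\ell+2)}{2}$. The per-$(A,j)$ bound is obtained precisely by the crude expansion you already sketched: write $M_{A,j}$ as a combination of three unitaries plus a multiple of the identity (the identity term vanishes since $\tr(Y-X)=0$), bound each unitary piece by von Neumann's trace inequality against the de Finetti error $\tfrac{4d^{2n}j}{k}$, and collect $(2+4q_{A,j})\leq 6$ to get $\int |\tr(\varrho_A^j)-q_{A,j}|^2\,dm \leq \tfrac{24\,d^{2n} j}{k}$. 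Summing this over $A\in\AA$ and $j=2,\ldots,\ell$ yields the stated polynomial directly, with $12 = 24/2$. No sharper operator-norm computation is needed, and your $\|M_{A,\ell}\|_\infty\leq 4$ would already suffice up to constants.

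The coupling step is likewise simpler than your Markov-weight scheme. Work with a \emph{single} de Finetti measure $m$ at depth $2\ell$; for every $j\leq\ell$ the integrand $|\tr(\varrho_A^j)-q_{A,j}|^2$ is evaluated against this same $m$ (the operators for smaller $j$ act on fewer tensor factors, padded with identity). Summing these nonnegative integrands \emph{inside} the integral and bounding the total by the sum of the per-term bounds gives
\[
\int \sum_{A\in\AA}\sum_{j=2}^{\ell}\bigl|\tr(\varrho_A^j)-q_{A,j}\bigr|^2\,dm(\varrho)\;\leq\;\frac{12\,|\AA|\,d^{2n}}{k}(\ell-1)(\ell+2)\,.
\]
Any $\varrho$ in the support achieving at most this average then satisfies every individual inequality simultaneously, since each summand is nonnegative. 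No separate measures $m_\ell$, no weights, and no Markov inequality are required.
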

\begin{proof}
We follow the strategy that if $X$
is close in trace distance to some
$Y = \int \varrho^{\ot k} dm(\varrho)$,
then the difference in their expectation values $|\langle \eta(\sigma^A) \rangle_X - \langle \eta(\sigma^A) \rangle_Y|$ for any $\ell$-cycle $\sigma^A$ is small.
By using quadratic constraints, this can be further strengthened,
such that $|\langle \eta(\sigma^A) \rangle_X - \langle \eta(\sigma^A) \rangle_{\varrho^{\otimes k}}|$
is small
for some $\varrho^{\otimes k}$
in the decomposition of
$\int \varrho^{\ot k} dm(\varrho)$. Finally, we consider the sum of squares over all $A$ and $j \leq l$ to show that there is a state close w.r.t.~all $A$ and powers $\ell$.

Let $\sigma = (\alpha_1\ldots\alpha_\ell)$ be some $\ell$-cycle. Then the primal feasible variable $X$ at level $2\ell \leq k$ of the hierarchy satisfies,
\begin{align}
\tr(\eta(\sigma^A) X) = q_{A, \ell}\,, \nn\\
\tr(\eta(\sigma^A)\otimes \eta(\sigma^A)^\dag X) = q_{A, \ell}^2\,,
\end{align}
Due to the finite quantum de Finetti theorem, there exists a measure $m$ such that
\begin{align} \label{eq:finiteqdf}
\Vert X - \int \varrho^{\ot 2\ell} dm(\varrho)\Vert_1 \leq  \frac{4d^{2n}\ell}{k} \,.
\end{align}
Let $Y = \int \varrho^{\ot 2\ell} dm(\varrho)$ and consider the expression
\begin{align}\label{eq:intsqrtoy}
  \int
 \big| \tr \big(
( \eta(\sigma^A) - q_{A, \ell}\one) \varrho^{\ot \ell} \big)\big|^2 dm(\varrho) 
&=\tr \big(
(\eta(\sigma^A) - q_{A, \ell}\one)
\ot
( \eta(\sigma^A) - q_{A, \ell}\one)^\dag
 Y \big)\,,
\end{align}
which is non-negative.

We now derive an upper bound for this expression. For this, observe that
\begin{align}\label{eq:long}
&\phantom{=|,}\tr \big(
(\eta(\sigma^A) - q_{A, \ell}\one)
\ot
( \eta(\sigma^A) - q_{A, \ell}\one)^\dag
 Y \big)\nn\\
&=
\phantom{|}\tr \big(
(\eta(\sigma^A) - q_{A, \ell}\one)
\ot
( \eta(\sigma^A) - q_{A, \ell}\one)^\dag
 (Y-X) \big) \nn\\
&= \phantom{|}\tr \big(\eta(\sigma^A) \ot \eta(\sigma^A)^\dag(Y-X) \big) - q_{A, \ell}\tr \big(\eta(\sigma^A) \ot \one (Y-X) \big) \nn\\
&\phantom{=}-q_{A, \ell}\tr \big(\one \ot \eta(\sigma^A)^\dag(Y-X) \big) + q_{A, \ell}^2 \tr \big(\one \ot \one (Y-X) \big) \nn\\
&=
\phantom{|}\tr \big(\eta(\sigma^A) \ot \eta(\sigma^A)^\dag(Y-X) \big) - q_{A, \ell}\tr \big(\eta(\sigma^A) \ot \one (Y-X) \big) \nn\\
&\phantom{=}-q_{A, \ell}\tr \big(\one \ot \eta(\sigma^A)^\dag(Y-X) \big)\,,
\end{align}
where we used the fact that
$\tr \big(\one \ot \one (Y-X) \big) = 0$ and that the whole expression vanishes on $X$.
Each term in \eqref{eq:long} is further bounded due to the Matrix Hölder inequality \cite{baumgartner2011inequality}:
for any unitary $U$, it holds that
\begin{align}
\vert \tr\big(X U\big) - \tr\big(Y U\big)\vert &\leq \sum_{i} s_i(X-Y) s_1\big(U\big) \nn\\
&=     \sum_i s_i(X-Y) \nonumber\\
&=     2\Vert X-Y\Vert_1\,,
\end{align}
where $s_i(U)$ denotes the $i$-th largest singular value of $U$ (which equals one for unitary matrices),
together with the identity $\Vert A \Vert_1 = \frac12 \sum_i s_i(A)$.
With this, \eqref{eq:long} is bounded by
\begin{align}\label{eq:upper}
 \tr \big(
(\eta(\sigma^A) - q_{A, \ell}\one)
\ot
( \eta(\sigma^A) - q_{A, \ell}\one)^\dag
 Y \big)
 &\leq (2 + 4q_{A,\ell}) ||Y - X||_1.
\end{align}
At this point, we use the finite quantum de Finetti theorem in Eq.~(\ref{eq:finiteqdf}). Together with the fact that $q_{A,\ell} \leq 1$, we get from Eqs.~(\ref{eq:intsqrtoy}) and (\ref{eq:upper}) that
\begin{align}\label{eq:upper2}
  \int \big| \tr \big( ( \eta(\sigma^A) - q_{A, \sigma}\one) \varrho^{\ot \ell} \big)\big|^2 dm(\varrho)
 &\leq \frac{24d^{2n}\ell}{k}
\end{align}
Note that the left-hand side of Eq.~(\ref{eq:upper2}) can be interpreted as the average of $\big| \tr \big( ( \eta(\sigma^A) - q_{A, \ell}\one) \varrho^{\ot \ell} \big)\big|^2$ over all $\varrho$ in the decomposition of $Y$. Thus, there must exist a $\varrho$ of non-zero measure such that
\begin{align}
\big|\tr(\varrho_A^\ell) - q_{A,\ell}\big|^2
&=
\big| \tr \big( ( \eta(\sigma^A) - q_{A, \sigma}\one) \varrho^{\ot \ell} \big)\big|^2
\leq \frac{24d^{2n}\ell}{k}.
\end{align}

A similar argument can be made for all spectra.
Consider the sum of the left-hand sides of Eqs.~(\ref{eq:long}) over all $A$ and $j\leq\ell$. Thus,
\begin{align}
  &\phantom{=}\int \sum_{A \in \AA}\sum_{j=2}^\ell \big| \tr \big( ( \eta( (1\ldots j)^A ) - q_{A, j}\one) \varrho^{\ot \ell} \big)\big|^2 dm(\varrho) \nn\\
  &= \sum_{A\in\AA} \sum_{j=2}^\ell \tr \big( (\eta((1\ldots j)^A) - q_{A, j}\one) \ot ( \eta( (j+1,\ldots,2j)^A ) - q_{A, j}\one)^\dag Y \big)\nn\\
  &\leq \frac{24|\AA|d^{2n}}{k} \sum_{j=2}^\ell j \nn \\
  &= \frac{12|\AA|d^{2n}}{k}(\ell - 1) (\ell + 2)\,.
\end{align}
Thus, there is again a state $\varrho$ in the decomposition of $Y$ with
\begin{align}
\sum_{A \in \AA}\sum_{j=2}^\ell \big| \tr \big( ( \eta( (1\ldots j)^A ) - q_{A, j}\one) \varrho^{\ot \ell} \big)\big|^2 \leq \frac{12|\AA|d^{2n}}{k}(\ell - 1) (\ell + 2)\,.
\end{align}
As the left-hand side of this inequality is a sum of positive terms, each of them must be bounded individually, yielding the claim.
\end{proof}

\section{Numerical results}\label{sec:numerics}

\subsection{Spectra of three-partite states}

\begin{figure}
 \centering
    \includegraphics[width=1\textwidth]{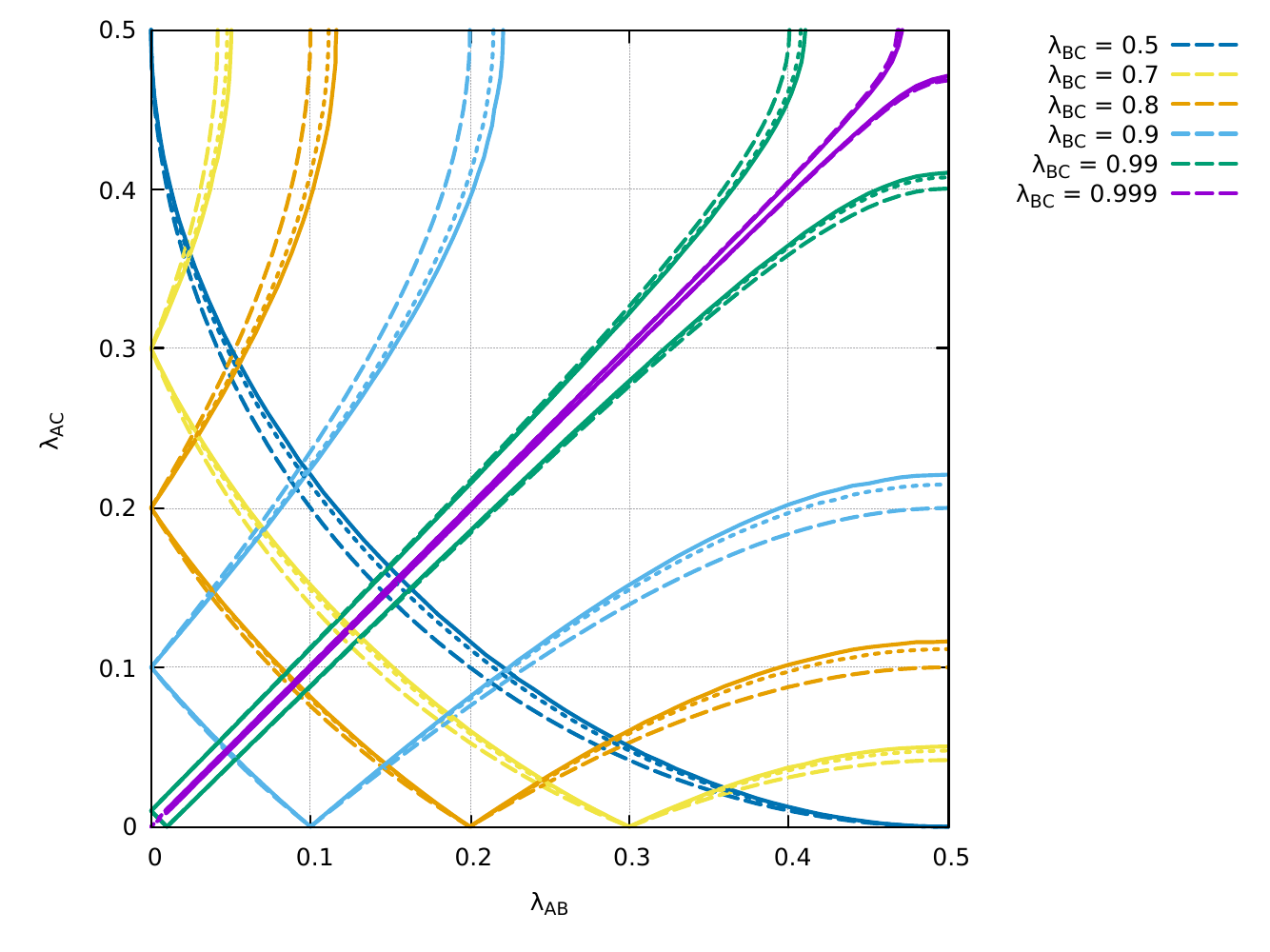}
    \caption{ \label{fig:threequbit_incompat_m4}
    {\bf Regions of spectral incompatibility.}
    Consider prescribed eigenvalues $\lambda_{AB}, \lambda_{AC}, \lambda_{BC}$ of rank-$2$ two-body marginals of three-partite states. 
    We plot the regions of certified incompatibility for values in the interval $[0, \frac12]$, 
    as the problem is symmetric under the exchange of $\lambda_{ij} \leftrightarrow 1-\lambda_{ij}$. 
    The infeasible regions are below (for $\lambda_{AC} < \lambda_{AB}$) and to the left (for $\lambda_{AC} > \lambda_{AB}$) of the lines.
    Shown are the boundaries of infeasibility for 
    $k=2$ (dashed lines),
    $k=4$ (dotted lines),
    and $k=4$ with factorizing permutations (solid lines), 
    with the height of the Young tableaux $d$ equal to the number of copies $k$. 
    Due Theorem~\ref{thm:dimfree}, the infeasibility regions are valid for tripartite states of arbitrary local dimensions.
    }
\end{figure}

As an example, consider a three-partite state $\varrho_{ABC}$ with two-body marginals $\varrho_{AB}, \varrho_{AC}$, and $\varrho_{BC}$
of rank two.
Their spectra are thus of the form
\begin{align}
 \spect(\varrho_{AB}) &=
 (\lambda_{AB}, 1-\lambda_{AB}) \nn\\
 \spect(\varrho_{AC}) &=
 (\lambda_{AC}, 1-\lambda_{AC}) \nn\\ 
 \spect(\varrho_{BC}) 
 &=
 (\lambda_{BC}, 1-\lambda_{BC})\,.
\end{align}
Evaluating the symmetry-reduced semidefinite programming hierarchy in Theorem~\ref{thm:hierarchy_dual_symred},
we obtain the incompatibility regions shown in 
Figure~\ref{fig:threequbit_incompat_m4}.
One sees that the use of four copies ($k=4$, dotted line) in the hierarchy
excludes a larger region of spectra than two only ($k=2$, dashed line). 
The use of factorizing permutations [see Remark \ref{rem:noncrossperm}] 
is even stronger ($k=4$, solid line). 

Recall that $d$ controls the height of Young tableaux used 
and that $k$ is the number of copies. 
In the symmetry-reduced formulation, the number of variables saturates when $d = k$ and contains fewer variables when $d < k$.
We choose the saturated parameters $k=d=2$ and $k=d=4$.
Due to Theorem~\ref{thm:dimfree}, our spectral incompatibility regions are valid for tripartite systems of arbitrary local dimensions.

A precise boundary of the region can be obtained through a divide and conquer algorithm with a precision of $10^{-3}$,
implemented with the Python interface PICOS~\cite{PICOS2022} and the solver MOSEK~\cite{mosek9.3}. 
The infeasibility boundaries are described by
\begin{align}\label{eq:two-copy}
&k=2: && \big(\lambda_{AB}-\frac12\big)^2 + \big(\lambda_{AC} - \frac12\big)^2 - \big(\lambda_{BC} - \frac12\big)^2 \leq  \frac14 \,, \nn\\
&k=4: && 7\Big(\big(\lambda_{AB}-\frac12\big)^2 + \big(\lambda_{AC}-\frac12\big)^2 - \big(\lambda_{BC}-\frac12\big)^2\Big) \nn\\
&& & - 2\Big(\big(\lambda_{AB}-\frac12\big)^4 + \big(\lambda_{AC}-\frac12\big)^4 - \big(\lambda_{BC}-\frac12\big)^4\Big) \leq \frac{13}{8} \,,
\end{align}
and two inequalities with exchanged roles of the parties.

\subsection{Purity inequalities}
The semidefinite programming hierarchy of Theorem \ref{thm:hierarchy_dual_symred} yields the infeasibility regions 
of Fig.~\ref{fig:threequbit_incompat_m4}.
Note that the SDP does not fix any individual eigenvalues but their power sums.
This can also be seen from the resulting incompatibility witnesses, 
which have the form $\sum_{\sigma \in S_k^n} y_\sigma \eta(\sigma)$. 
In fact, the optimal solution of the dual yields an optimal solution that does not depend on the precise choice of the eigenvalues, and $\tr(\varrho \otimes \varrho \sum_{\sigma \in S_k^n} y_\sigma \eta(\sigma)) \geq 0$ yields the following purity relations:
\begin{align}
 &k=2: && 1 - \tr(\varrho_{AB}^2) - \tr(\varrho_{AC}^2) + \tr(\varrho_{BC}^2) \geq 0 \,,\nn\\
 &k=4: && 1 - \frac1{20}\Big(15\tr(\varrho_{AB}^2) - 3\tr(\varrho_{AB}^4) + 15\tr(\varrho_{AC}^2) - 3\tr(\varrho_{AC}^4) \nn\\
      & && \phantom{ 1 - \frac1{20}\Big(}+9\tr(\varrho_{BC}^2) - 16\tr(\varrho_{BC}^3) + 3\tr(\varrho_{BC}^4)\Big) \geq 0\,.
\end{align}
These inequalities are valid for all tripartite states of arbitrary local dimension,
and correspond to the incompatibility witnesses

\begin{align}
&k=2: &&A \ot P \ot P   +   P \ot A \ot A, \nn\\
&k=4: &&\one - \frac1{20}\Big(15\eta\big((12)^{AB}\big) - 3\eta\big((1234)^{AB}\big) + 15\eta\big((12)^{AC}\big) - 3\eta\big((1234)^{AC}\big) \nn\\
&     &&\phantom{\one - \frac1{20}\Big(}+9\eta\big((12)^{BC}\big) - 16\eta\big((123)^{BC}\big) +3\eta\big((1234)^{BC}\big)\Big),
\end{align} 
where $P$ and $A$ are the projectors onto the symmetric and antisymmetric subspaces of $(\C^d)^{\ot 2}$.
We note that the inequality for $k=2$
is a linear combination of Rains' shadow inequalities~\cite{817508},
while the $k=4$ relation seems to be new.
Due to Theorem~\ref{thm:dimfree}, these inequalities hold for tripartite systems of arbitrary dimensions.

\subsection{Flat marginal spectra}
As a final example, 
we consider the two-body marginal spectra of pure three- and four-partite states. 
To simplify the discussion, we assume the two-body marginals to be flat, 
\begin{equation}
 \spect(\varrho_{S}) = \big(\frac{1}{r_S}, \dots, \frac{1}{r_S}, 0,\dots, 0\big)\,,
\end{equation} 
so that $r_S$ is the rank of the marginal on $S = \{i,j\}$ with $i\neq j$.
Some constraints on ranks are known: from the Schmidt decomposition,
it follows that tracing out a subsystem of dimension $d$ from a 
pure state yields a state of rank at most $d$. 
Additionally, Cadney et al.~\cite{CADNEY2014153} have conjectured the inequality
$r_{AB} r_{AC} \geq r_{BC}$.~\footnote{The conjecture is claimed to be proven in the preprint
~\cite{https://doi.org/10.48550/arxiv.2105.07679}.}

Let us apply the symmetry-reduced SDP hierarchy.
Consider the case of three-partite systems 
and fix the ranks $r_{AB}$, $r_{AC}$ and $r_{BC}$ and the local dimension $d$.
We ask whether the spectra are compatible with a pure joint state
and apply the SDP hierarchy of Theorem~\ref{thm:hierarchy_dual_symred} with $k=4$ and factorizing permutations. 
The nonexistence of pure states with flat marginal spectra is shown in Fig.~\ref{fig:flat_spectra} (top).
These numerical results agree with the known and conjectured rank inequalities.
In the case of four-partite states,
we fix ranks $r_{AB}$, $r_{AC}$ and $r_{AD}$ instead. 
Here, the hierarchy yields stronger results, shown in Fig.~\ref{fig:flat_spectra} (bottom).
In particular, depending on the local dimension,
we can exclude states with flat marginals and ranks $[r_{AB}, r_{AC}, r_{AD}]$ equal to 
$[3,2, 2]$, 
$[4,2,2]$,
$[4,3,2]$,
and
$[4,3,3]$.

It is interesting to see that there exist states that are excluded in dimension $d$, 
but that can be shown to exist in dimension $d'>d$.\footnote{
This is not in contradiction to Thm.~\ref{thm:dimfree}, 
as the incompatibility of these states was shown using $k>d$.}
This shows that our hierarchy can also obtain meaningful constraints on spectra
whose compatibility is dimension-dependent.

\section{Extensions}\label{sec:extensions}

We briefly sketch further extensions of our method.

\begin{enumerate}
\item {\bf Local unitary invariants and quantum codes.}
We sketch the construction of local unitary invariants as described by Rains~\cite{817508}.
Any polynomial in the coefficients of a matrix~$M$ can,
with suitable operators $F^{(k)}$, be written as
$
 \sum_k \tr(F^{(k)} M^{\otimes k})
$.
A unitary-invariant polynomial satisfies
\begin{align}
\sum_k \tr\big(F^{(k)} M^{\otimes k}\big)
&=
\sum_k \tr\big(F^{(k)} U^{\ot k}
M^{\otimes k}
(U^{\ot k})^{-1} \big)\,\quad \forall U \in \UU(d)\,.
\end{align}
This implies that
$F^{(k)} = (U^{\otimes k})^{-1} F^{(k)} U^{\otimes k}$ for all $U \in \UU(d)$.
By the Schur-Weyl duality [Eq.~\eqref{eq:SW}],
$F^{(k)}$ is spanned by elements that are
in a one-to-one correspondence with elements of $S_k$.
In other words, $F^{(k)} = \eta(\a)$ with $\a \in \C S_k$.
Considering a {\em local} unitary invariant polynomial,
the invariance is
\begin{equation}
 \tr(F M^{\otimes k}) =
 \tr(F \frU^{-1} M^{\otimes k} \frU )
\end{equation}
for all $\frU = (U_1 \ot \dots \ot U_n)^{\ot k} \,: U_1, \dots, U_n \in \UU(d)$.
With Eq.~\eqref{eq:SWn}, it follows that $F = \eta(\a)$ with $\a \in \C S_k^n$.

Consider now the problem of the compatibility of local unitary invariants with a quantum state.
Then the arguments of Section~\ref{sec:SDPref} and ~\ref{sec:comp},
but replacing $\sigma^A$ and $q_{A, \sigma}$ by a more general $\a \in \C S_k^n$ and $q_\sigma \in \C$,
establishes a complete semidefinite programming hierarchy
for the compatibility of a set of local unitary invariants.

This can be used to refute the existence of quantum codes with given parameters $(\!(n,K,\delta)\!)_d$:
a variant of the Knill-Laflamme condition states that a projector $\Pi$  on $(\C^d)^{\ot n}$
corresponds to a quantum code of distance~$\delta$,
if and only if
\begin{equation}
K B_j = A_j \quad j = 0,\dots, \delta-1\,.
\end{equation}
Here, the weight enumerators $A_j$ and $B_j$ are given by
\begin{align}
 A_j = \sum_{|E| = j} \tr(E \Pi) \tr(E^\dag \Pi)\,, \quad\quad
 B_j = \sum_{|E| = j} \tr(E \Pi E^\dag \Pi)\,.
\end{align}
Above, the sum spans over all elements $E$ of an orthonormal tensor-product basis of weight $|E|=j$
(e.g., the Pauli basis).
The $A_j$ and $B_j$ are both local unitary invariants~\cite{681316}.
Our hierarchy can then be used to rule out the existence of a compatible $\varrho = \Pi/K$,
certifying that a code with given parameters does not exist.

\smallskip
\item {\bf Equivariant state polynomials.}
Equivariant state polynomials form a type of polynomials whose variables are states,
and whose positivity is invariant under local unitary transformations.
To see how they are constructed from our formalism, note that
local unitary invariants of degree $k$ of $n$-partite quantum states
correspond to the expectation values of elements of $\C S_k^n$ on $\varrho^{\otimes k}$.
For example, for a bipartite state $\varrho$,
the element $(123)\times (132)\in S_3^2$ 
gives the invariant
\begin{equation}
\tr\big( (123)_1 \otimes (132)_2 (\varrho\ot \varrho\ot \varrho)\big)
= \tr\big(
(\varrho^{T_2} \varrho^{T_2})^{T_2} \varrho
\big)\,.
\end{equation}
More generally, the expectation values can also be taken with respect to
$\varrho_1^{k_1} \ot \dots \ot \varrho_m^{k_m}$. For our example above, one can form
expressions in two and three variables,
$\tr\big(\varrho^{T_2} \varrho^{T_2})^{T_2} \nu\big)$ and
$\tr\big(\varrho^{T_2} \sigma^{T_2})^{T_2} \nu\big)$.

Equivariant state polynomials are obtained by varying over a state that is linear in such expression.
More precisely,
every non-negative unitary invariant that is linear in at least one state
is in a one-to-one correspondence with a positive semidefinite equivariant state polynomial.
To see this, note that
\begin{align}\label{eq:trace_ineq}
 \tr\big(\eta_d(\a) \varrho_{1}^{\ot k_1} \ot \dots \ot \varrho_{m}^{\ot k_m} \ot \nu \big ) \geq 0
\end{align}
for all states $\varrho_1,\dots, \varrho_m,\nu$ if and only if the following is a positive semidefinite matrix,
\begin{align}\label{eq:psd_ineq}
 \tr_{1\dots m-1}\big(\eta_d(\a) \varrho_{1}^{\ot k_1} \ot \dots \ot \varrho_{m}^{\ot k_m} \ot \one \big ) \geq 0\,.
\end{align}
This follows from the self-duality of the positive cone, $A \geq 0$ if and only if $\tr(AB)\geq 0$ for all $B \geq 0$,
and the defining property of the partial trace, $\tr(M (N \ot \one)) = \tr( \tr_2(M) N)$.
Thus, to determine whether Eq.~\eqref{eq:psd_ineq} is positive semidefinite for all $\varrho_1, \dots, \varrho_m$,
it is enough to minimize Eq.~\eqref{eq:trace_ineq}
over all states $\varrho_1, \dots, \varrho_m, \nu$.
This can be done by taking $\tr(\eta(\sigma)X)$ as objective function in Eq.~\eqref{eq:hierarchy_primal}
with a relaxed symmetry constraint on $X$:  one does not aim to approximate $\varrho^{\ot k}$,
but $\varrho_1^{k_1} \ot \dots \ot \varrho_m^{k_m} \ot \nu$, and thus $X$ is invariant under the permutation
of the subsystems corresponding to $\varrho_1^{\ot k_1},\dots, \varrho_m^{\ot k_m}$ individually.
This yields a complete hierarchy to optimize over equivariant state polynomials under equivariant state polynomial or local unitary constraints.
With this one can systematically search for new purity and moment inequalities,
relevant, for example, for the task of entanglement detection~\cite{huber2020positive,PhysRevLett.132.070202}.
\end{enumerate}

\section{Related work}
In recent years, extensive work has approached the
quantum marginal problem~\cite{1742-6596-36-1-014, Schilling2014},
developing constraints on
operators~\cite{Butterley2006},
von Neumann entropies~\cite{0806.2962v1, doi:10.1063/1.4808218},
purities~\cite{PhysRevA.98.052317}
and
ranks~\cite{CADNEY2014153} of subsystems.
The perhaps most systematic approach to date uses representation theory of the symmetric group~\cite{quant-ph/0409113v1, Christandl2006}
and generalizes the polygon inequalities~\cite{quant-ph/0309186v2, PhysRevLett.90.107902}.
Here, we highlight
the existence of
critical~\cite{Bryan2018, Bryan2019locallymaximally}
and
absolutely maximally entangled states~\cite{PhysRevLett.118.200502, Yu2021, PhysRevLett.128.080507}
as guiding problems, achieving extreme values in spectra and entropies.
This has also led to the development of methods to reconstruct the joint state from
partial information~\cite{https://doi.org/10.48550/arxiv.2209.14154, 10.1088/1367-2630/abe15e},
to tackle the question of uniqueness in reconstruction~\cite{PhysRevA.99.062104, PhysRevA.96.010102, Klassen_phd-thesis},
to detect entanglement from partial information~\cite{PhysRevA.98.062102, Navascues2021entanglement},
and to investigate marginals of random states~\cite{Christandl_2014, Collins_2023}.
Fermionic settings are treated in
~\cite{PhysRevLett.108.263002, https://doi.org/10.48550/arxiv.2105.06459}.
For bi- and tripartite systems, complete lists of inequalities for non-overlapping spectra are
given in Refs.~\cite{1742-6596-36-1-014, VergneWalter2017}.
Ref.~\cite{munne2024sdpboundsquantumcodes} provides a complete hierarchy for the non-existence of quantum codes
using the state polynomial optimization framework.

The key systematic approach to overlapping marginals of spin systems
is that of symmetric extensions~\cite{PhysRevA.90.032318, Yu2021}.
Our work is inspired by Hall~\cite{PhysRevA.75.032102}, Yu et al.~\cite{Yu2021}, and Huber et al.~\cite{huber2022dimensionfree}.
However, these are neither applicable to the spectral formulation of the problem nor can they give results that are
dimension-free.

\section{Conclusions}
Our main result, Theorem~\ref{thm:hierarchy_dual_symred}, 
combines the techniques of
symmetric extension and symmetry reduction
to certify the incompatibility of marginal spectra.
This simple idea turns out to be quite powerful,
allowing for a complete hierarchy for spectral compatibility in arbitrary local dimensions
(Theorem~\ref{thm:dimfree}). 
At the same time, it can be used to differentiate different dimensions with respect to spectral compatibility:
There exist spectra which are non-trivially incompatible in dimension $d$, but compatible in $d'>d$.

We stress that our hierarchy is applicable not only to to reformulations of the spectral marginal problem
such as non-vanishing Kronecker coefficients and sums of hermitian matrices,
but also to the compatibility of local unitary invariants and the existence of quantum codes.
Finally, we believe that the equivariant state polynomial optimization
framework sketched in Section~\ref{sec:extensions} could find further applications.

A natural question is how to include in the hierarchy~(\ref{eq:hierarchy_dual})
constraints arising from a positive partial transpose, which could strengthen the symmetric extension hierarchy.
A symmetry-reduction similar to the one employed here 
would require a decomposition of the Brauer algebra.
For the case of $k=3$ copies, the Brauer Algebra
can be expressed as a linear combination of elements from
$\C S_3$ ~\cite{PhysRevA.63.042111} 
and it should be possible to formulate a semidefinite program analogous
to the level $3$ in Theorem~\ref{thm:hierarchy_dual_symred}.

\begin{acknowledgments}
FH was supported
by the Foundation for Polish Science through TEAM-NET (POIR.04.04.00-00-17C1/18-00),
by the National Science Centre Poland 2024/54/E/ST2/00451,
and
by the Polish National Agency for Academic Exchange under the Strategic Partnership Programme grant BNI/PST/2023/1/00013/U/00001.
For the purpose of Open Access, the author has applied a CC-BY public copyright licence to any Author Accepted Manuscript (AAM) version arising from this submission.

NW acknowledges support by the 
QuantERA grant QuICHE via
the German Ministry of Education and Research (BMBF grant no. 16KIS1119K).
We thank the
Dmitry Grinko,
Marcus Huber, 
Robert König, 
Felix Leditzky,
Simon Morelli, 
Saverio Pascazio,
Claudio Procesi,
Micha{\l} Studzinski,
Michael Walter,
and Andreas Winter for fruitful discussions
and the TQC referees for valuable comments.
We thank Laurens T. Ligthart for the idea of proving completeness using quadratic constraints.
Computational infrastructure and support were provided by the Centre for Information and Media Technology at Heinrich Heine University Düsseldorf.
\end{acknowledgments}

\newpage

\begin{table}
\begin{tabular}{@{}ccllll@{}}
\toprule
system & copies & $N_\mathrm{naive}$ & $N_\mathrm{sym}$ &\# blocks & max. size\\
\midrule
2 qubits
            & 2 & $\approx 2.6 \cdot 10^{2}$ & 4 & 4 & 1 \\
            & 3 & $\approx 4.1 \cdot 10^{3}$ & 25 & 4 & 4 \\
            & 4 & $\approx 6.6 \cdot 10^{4}$ & 196 & 9 & 9 \\
            & 5 & $\approx 1.0 \cdot 10^{6}$ & 1764 & 9 & 25 \\
3 qubits
            & 2 & $\approx 4.1 \cdot 10^{3}$ & 8 & 8 & 1 \\
            & 3 & $\approx 2.6 \cdot 10^{5}$ & 125 & 8 & 8 \\
            & 4 & $\approx 1.7 \cdot 10^{7}$ & 2744 & 27 & 27 \\
            & 5 & $\approx 1.1 \cdot 10^{9}$ & 74088 & 27 & 125 \\
4 qubits
            & 2 & $\approx 6.6 \cdot 10^{4}$ & 16 & 16 & 1 \\
            & 3 & $\approx 1.7 \cdot 10^{7}$ & 625 & 16 & 16 \\
            & 4 & $\approx 4.3 \cdot 10^{9}$ & 38416 & 81 & 81 \\
            & 5 & $\approx 1.1 \cdot 10^{12}$ & 3111696 & 81 & 625 \\
5 qubits
            & 2 & $\approx 1.0 \cdot 10^{6}$ & 32 & 32 & 1 \\
            & 3 & $\approx 1.1 \cdot 10^{9}$ & 3125 & 32 & 32 \\
            & 4 & $\approx 1.1 \cdot 10^{12}$ & 537824 & 243 & 243 \\
            & 5 & $\approx 1.1 \cdot 10^{15}$ & 130691232 & 243 & 3125 \\
\midrule
2 qutrits
            & 3 & $\approx 5.3 \cdot 10^{5}$ & 36 & 9 & 4 \\
            & 4 & $\approx 4.3 \cdot 10^{7}$ & 529 & 16 & 9 \\
            & 5 & $\approx 3.5 \cdot 10^{9}$ & 10609 & 25 & 36 \\
3 qutrits
            & 3 & $\approx 3.9 \cdot 10^{8}$ & 216 & 27 & 8 \\
            & 4 & $\approx 2.8 \cdot 10^{11}$ & 12167 & 64 & 27 \\
            & 5 & $\approx 2.1 \cdot 10^{14}$ & 1092727 & 125 & 216 \\
4 qutrits
            & 3 & $\approx 2.8 \cdot 10^{11}$ & 1296 & 81 & 16 \\
            & 4 & $\approx 1.9 \cdot 10^{15}$ & 279841 & 256 & 81 \\
            & 5 & $\approx 1.2 \cdot 10^{19}$ & 112550881 & 625 & 1296 \\
\midrule
2 ququarts
            & 4 & $\approx 4.3 \cdot 10^{9}$ & 576 & 25 & 9 \\
            & 5 & $\approx 1.1 \cdot 10^{12}$ & 14161 & 36 & 36 \\
3 ququarts
            & 4 & $\approx 2.8 \cdot 10^{14}$ & 13824 & 125 & 27 \\
            & 5 & $\approx 1.2 \cdot 10^{18}$ & 1685159 & 216 & 216 \\
4 ququarts
            & 4 & $\approx 1.8 \cdot 10^{19}$ & 331776 & 625 & 81 \\
            & 5 & $\approx 1.2 \cdot 10^{24}$ & 200533921 & 1296 & 1296 \\
\bottomrule
\end{tabular}
\vspace{1.4em}
\caption{
Number of real variables in the naive and symmetry-reduced SDP.
For the symmetry-reduced SDP the number of blocks and the size of the largest block is shown.
In comparison, an SDP size commonly solvable on modern laptops 
is that of a seven-qubit density matrix with $16384$ real variables.
Note that $d$ controls the height of Young tableaux used
and $k$ is the number of copies.
In the symmetry-reduced formulation, the number of variables saturates when 
$d=k$; containing fewer variables when $d<k$.
\label{tab:dims}}
\end{table}

\begin{table}[tpb]
 \centering
    \includegraphics[width=0.93\textwidth]{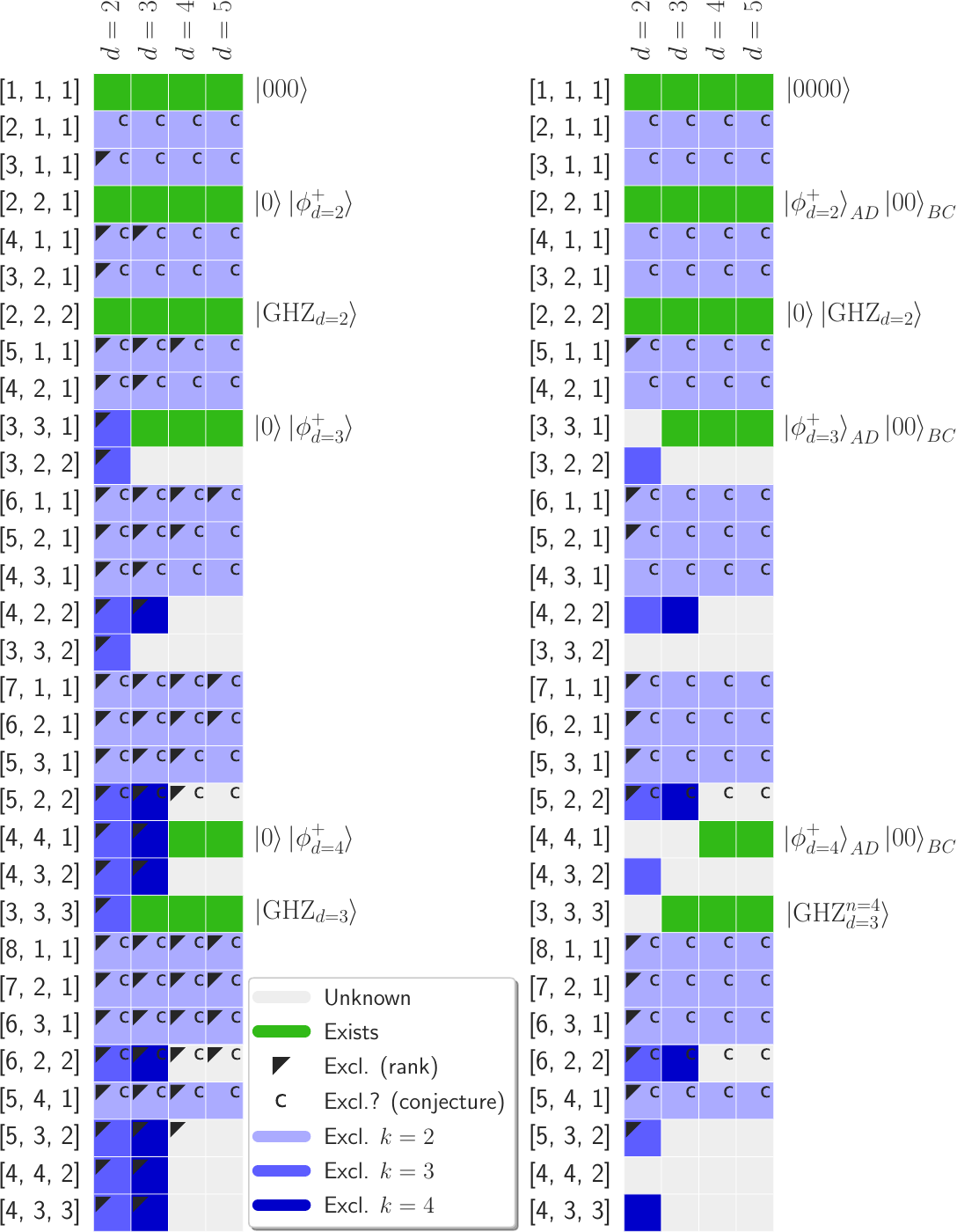}
    \caption{\label{fig:flat_spectra}
    {\bf Compatibility of two-body marginals with flat spectra.}
    We apply the hierarchy of Thm.~\ref{thm:hierarchy_dual_symred} for $k\in\{2,3,4\}$ 
    to two-body marginals with flat spectra. 
    Cases are excluded by incompatibility certificates (blue), 
    by known rank constraints (black triangle), and by conjectured rank constraints (c).
    Also shown are compatible pure states (green). 
    {Left:} Fix the ranks $[r_{AB},r_{AC},r_{BC}]$ and assume flat spectra for the two-body marginals of three-partite pure states of different local dimensions. Our hierarchy coincides with known and conjectured rank inequalities.
    {Right:} fix the ranks $[r_{AB},r_{AC},r_{AD}]$ and assume flat spectra of the two-body marginals of four-partite pure states.
    Additional cases $[3,2,2], [4,2,2], [4,3,2], [4,3,3]$ are excluded depending on the local dimension.
}
\end{table}

\bibliographystyle{quantum}
\bibliography{current_bib}
\end{document}